\documentclass[9.5pt,journal,compsoc]{IEEEtran}
\usepackage{xspace}
\usepackage{soul}
\usepackage{url}
\usepackage{graphicx}
\usepackage{amsmath}
\usepackage{amsfonts}
\usepackage{multirow}
\usepackage{amsthm}
\usepackage{booktabs}
\usepackage{epstopdf}
\usepackage[table]{xcolor}
\usepackage[labelformat=simple]{subcaption}
\usepackage{caption}
\usepackage{bbm}
\usepackage{dsfont}
\usepackage[ruled,linesnumbered]{algorithm2e}
\usepackage{setspace}
\usepackage{braket}
\usepackage[normalem]{ulem}
\useunder{\uline}{\ul}{}
\theoremstyle{definition}
\newtheorem{theorem}{Theorem}[section]
\newtheorem{corollary}[theorem]{Corollary}

\newtheorem{myDef}{Definition}[section]
\newtheorem{example}{Example}
\newtheorem{problem}{Problem}
\newcommand{\tabincell}[2]{\begin{tabular}{@{}#1@{}}#2\end{tabular}}
\newcommand{\ourmethod}{GSim\xspace}

% *** CITATION PACKAGES ***
%
\ifCLASSOPTIONcompsoc
  % IEEE Computer Society needs nocompress option
  % requires cite.sty v4.0 or later (November 2003)
  \usepackage[nocompress]{cite}
\else
  % normal IEEE
  \usepackage{cite}
\fi

\ifCLASSINFOpdf
  % \usepackage[pdftex]{graphicx}
  % declare the path(s) where your graphic files are
  % \graphicspath{{../pdf/}{../jpeg/}}
  % and their extensions so you won't have to specify these with
  % every instance of \includegraphics
  % \DeclareGraphicsExtensions{.pdf,.jpeg,.png}
\else
  % or other class option (dvipsone, dvipdf, if not using dvips). graphicx
  % will default to the driver specified in the system graphics.cfg if no
  % driver is specified.
  % \usepackage[dvips]{graphicx}
  % declare the path(s) where your graphic files are
  % \graphicspath{{../eps/}}
  % and their extensions so you won't have to specify these with
  % every instance of \includegraphics
  % \DeclareGraphicsExtensions{.eps}
\fi

\hyphenation{op-tical net-works semi-conduc-tor}

\begin{document}

\title{\ourmethod: A Graph Neural Network based Relevance Measure for Heterogeneous Graphs}

%\title{\ourmethod: A Graph Neural Network for Relevance Measure in Heterogeneous Graphs}

%
% author names and IEEE memberships
% note positions of commas and nonbreaking spaces ( ~ ) LaTeX will not break
% a structure at a ~ so this keeps an author's name from being broken across
% two lines.
% use \thanks{} to gain access to the first footnote area
% a separate \thanks must be used for each paragraph as LaTeX2e's \thanks
% was not built to handle multiple paragraphs
%
%
%\IEEEcompsocitemizethanks is a special \thanks that produces the bulleted
% lists the Computer Society journals use for "first footnote" author
% affiliations. Use \IEEEcompsocthanksitem which works much like \item
% for each affiliation group. When not in compsoc mode,
% \IEEEcompsocitemizethanks becomes like \thanks and
% \IEEEcompsocthanksitem becomes a line break with idention. This
% facilitates dual compilation, although admittedly the differences in the
% desired content of \author between the different types of papers makes a
% one-size-fits-all approach a daunting prospect. For instance, compsoc 
% journal papers have the author affiliations above the "Manuscript
% received ..."  text while in non-compsoc journals this is reversed. Sigh.

\author{Linhao~Luo,~Yixiang~Fang,~Moli~Lu,~Xin~Cao,~Xiaofeng~Zhang,~Wenjie~Zhang% <-this % stops a space
% \author{Michael~Shell,~\IEEEmembership{Member,~IEEE,}
%         John~Doe,~\IEEEmembership{Fellow,~OSA,}
%         and~Jane~Doe,~\IEEEmembership{Life~Fellow,~IEEE}% <-this % stops a space
\IEEEcompsocitemizethanks{\IEEEcompsocthanksitem Linhao Luo, Moli Lu and Xiaofeng Zhang are with the Department of Computer Science, Harbin Institute of Technology, Shenzhen (E-mail: \{luolinhao,21S051052\}@stu.hit.edu.cn; zhangxiaofeng@hit.edu.cn.).}
\IEEEcompsocitemizethanks{\IEEEcompsocthanksitem Yixiang Fang is with the School of Data Science, The Chinese University of Hong Kong, Shenzhen (E-mail: fangyixiang@cuhk.edu.cn).}
\IEEEcompsocitemizethanks{\IEEEcompsocthanksitem Xin Cao and Wenjie Zhang are with the University of New South Wales, Australia (E-mail: \{xin.cao,wenjie.zhang\}@unsw.edu.au).}
}% <-this % stops an unwanted space

% note the % following the last \IEEEmembership and also \thanks - 
% these prevent an unwanted space from occurring between the last author name
% and the end of the author line. i.e., if you had this:
% 
% \author{....lastname \thanks{...} \thanks{...} }
%                     ^------------^------------^----Do not want these spaces!
%
% a space would be appended to the last name and could cause every name on that
% line to be shifted left slightly. This is one of those "LaTeX things". For
% instance, "\textbf{A} \textbf{B}" will typeset as "A B" not "AB". To get
% "AB" then you have to do: "\textbf{A}\textbf{B}"
% \thanks is no different in this regard, so shield the last } of each \thanks
% that ends a line with a % and do not let a space in before the next \thanks.
% Spaces after \IEEEmembership other than the last one are OK (and needed) as
% you are supposed to have spaces between the names. For what it is worth,
% this is a minor point as most people would not even notice if the said evil
% space somehow managed to creep in.

% The paper headers
\markboth{Journal of \LaTeX\ Class Files,~Vol.~14, No.~8, August~2015}%
{Shell \MakeLowercase{\textit{et al.}}: Bare Demo of IEEEtran.cls for Computer Society Journals}
% The only time the second header will appear is for the odd numbered pages
% after the title page when using the twoside option.
% 
% *** Note that you probably will NOT want to include the author's ***
% *** name in the headers of peer review papers.                   ***
% You can use \ifCLASSOPTIONpeerreview for conditional compilation here if
% you desire.

% The publisher's ID mark at the bottom of the page is less important with
% Computer Society journal papers as those publications place the marks
% outside of the main text columns and, therefore, unlike regular IEEE
% journals, the available text space is not reduced by their presence.
% If you want to put a publisher's ID mark on the page you can do it like
% this:
%\IEEEpubid{0000--0000/00\$00.00~\copyright~2015 IEEE}
% or like this to get the Computer Society new two part style.
%\IEEEpubid{\makebox[\columnwidth]{\hfill 0000--0000/00/\$00.00~\copyright~2015 IEEE}%
%\hspace{\columnsep}\makebox[\columnwidth]{Published by the IEEE Computer Society\hfill}}
% Remember, if you use this you must call \IEEEpubidadjcol in the second
% column for its text to clear the IEEEpubid mark (Computer Society jorunal
% papers don't need this extra clearance.)

% use for special paper notices
%\IEEEspecialpapernotice{(Invited Paper)}

% for Computer Society papers, we must declare the abstract and index terms
% PRIOR to the title within the \IEEEtitleabstractindextext IEEEtran
% command as these need to go into the title area created by \maketitle.
% As a general rule, do not put math, special symbols or citations
% in the abstract or keywords.
\IEEEtitleabstractindextext{%

\begin{abstract}
Heterogeneous graphs, which contain nodes and edges of multiple types, are prevalent in various domains, including bibliographic networks, social media, and knowledge graphs.
As a fundamental task in analyzing heterogeneous graphs, relevance measure aims to calculate the relevance between two objects of different types, which has been used in many applications such as web search, recommendation, and community detection.
Most of existing relevance measures focus on homogeneous networks where objects are of the same type, and a few measures are developed for heterogeneous graphs, but they often need the pre-defined meta-path.
Defining meaningful meta-paths requires much domain knowledge, which largely limits their applications, especially on schema-rich heterogeneous graphs like knowledge graphs.
Recently, the Graph Neural Network (GNN) has been widely applied in many graph mining tasks, but it has not been applied for measuring relevance yet.
To address the aforementioned problems, we propose a novel GNN-based relevance measure, namely \ourmethod.
Specifically, we first theoretically analyze and show that GNN is effective for measuring the relevance of nodes in the graph.
We then propose a context path-based graph neural network (CP-GNN) to automatically leverage the semantics in heterogeneous graphs.
Moreover, we exploit CP-GNN to support relevance measures between two objects of any type.
Extensive experiments demonstrate that \ourmethod outperforms existing measures.
\end{abstract}

% Note that keywords are not normally used for peerreview papers.
\begin{IEEEkeywords}
  Relevance measure, graph neural network, heterogeneous graphs, context path
\end{IEEEkeywords}}

% make the title area
\maketitle

% To allow for easy dual compilation without having to reenter the
% abstract/keywords data, the \IEEEtitleabstractindextext text will
% not be used in maketitle, but will appear (i.e., to be "transported")
% here as \IEEEdisplaynontitleabstractindextext when the compsoc 
% or transmag modes are not selected <OR> if conference mode is selected 
% - because all conference papers position the abstract like regular
% papers do.
\IEEEdisplaynontitleabstractindextext
% \IEEEdisplaynontitleabstractindextext has no effect when using
% compsoc or transmag under a non-conference mode.

% For peer review papers, you can put extra information on the cover
% page as needed:
% \ifCLASSOPTIONpeerreview
% \begin{center} \bfseries EDICS Category: 3-BBND \end{center}
% \fi
%
% For peerreview papers, this IEEEtran command inserts a page break and
% creates the second title. It will be ignored for other modes.
\IEEEpeerreviewmaketitle

\IEEEraisesectionheading
{\section{Introduction}
\label{sec:introduction}}

Nowadays, many real-world data are often modeled as heterogeneous graphs, which contain multiple typed nodes and edges.
As a fundamental topic in network science, similarity measure has been studied for decades and found in various real-world applications, such as web search \cite{jeh2003scaling}, recommendation \cite{pera2013group}, and community detection \cite{zarandi2018community}. 
Conventional studies on similarity measures (e.g., SimRank \cite{jeh2002simrank} and P-rank \cite{zhao2009p}) mainly focus on homogeneous graphs, where objects are of the same type.
However, due to the heterogeneous types of nodes and edges in heterogeneous graphs, it is meaningless to measure the ``similarity'' between two nodes of different types by directly applying these existing similarity measures.
As a result, it is necessary to develop novel measures for quantifying the relevance between two nodes in heterogeneous graphs.
For example, Fig. \ref{fig:HIN} (a) depicts a bibliographic network with four types of nodes (i.e., {\it paper}, {\it author}, {\it venue}, and {\it subject}) and five relations (edge types) among them. As shown in Fig. \ref{fig:HIN} (b), we may want to find an author that is the most relevant to a certain venue.
Unlike the similarity measures in homogeneous graphs which focus on objects of the same type, the relevance measure in heterogeneous graphs should measure the relevance between two objects of different types \cite{shi2014hetesim}.

\begin{figure}[]
    \centering
    \includegraphics[trim=0 0.4cm 0cm 0,clip,width=1\columnwidth]{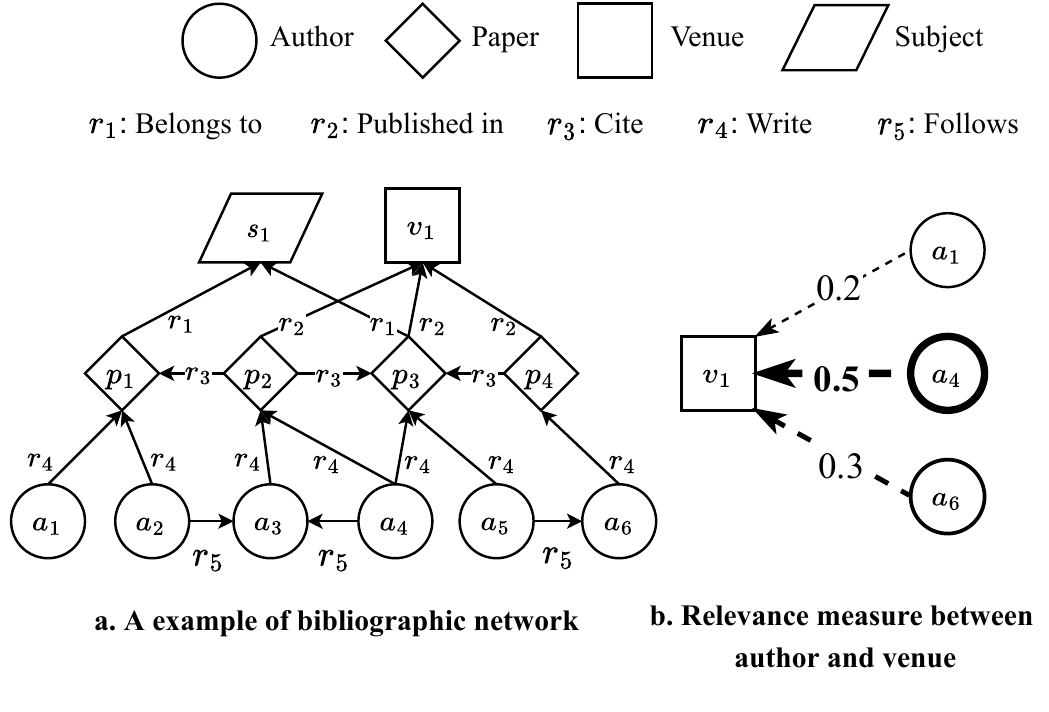}
    \caption{An example bibliographic information network and relevance measure.}
    \label{fig:HIN}
\end{figure}

Compared to measuring the similarity in homogeneous graphs, measuring the relevance in heterogeneous graphs is more challenging, because the multiple types of nodes and edges carry abundant semantic information.
This kind of semantic information plays an important role in measuring the relevance, but it is ignored by many similarity measures in homogeneous networks \cite{jeh2002simrank,jeh2003scaling}.
For instance, in Fig. \ref{fig:HIN} (a), there is no direct connection between authors and venues in the graph, making it hard to measure their relevance directly.
However, if considering the path {\it Author-Paper-Venue}, we can claim that an author is relevant to a venue, as long as she/he has written some papers published in this venue. The path that reveals the semantic information above in the heterogeneous graph is called the {\it meta-path} \cite{sun2011pathsim}.

In recent years, several works have employed the meta-path for measuring the relevance of nodes in heterogeneous graphs, such as
PathSim \cite{sun2011pathsim},
HeteSim \cite{shi2014hetesim},
RelSim \cite{wang2016relsim}, and 
AvgSim \cite{xiao2016avgsim}). 
Nevertheless, a major limitation of these measures is that their performance highly depends on the quality of the pre-defined meta-paths, which needs to be selected manually by domain experts.
Moreover, different meta-paths reveal different semantics, and the number of possible meta-paths increases exponentially with the path length, meaning that it is almost infeasible to find all meta-paths to comprehensively capture the semantics.
Furthermore, they fail to incorporate several meta-paths at the same time to measure the relevance in a collective manner.
In addition, different meta-paths contribute differently to the relevance measure, which imposes great challenges for distinguishing their importance.
Therefore, it is desirable to develop new relevance measures without using pre-defined meta-paths.

With the development of graph representation learning \cite{grover2016node2vec}, the node embeddings that encode the structure and semantics information have been shown effectiveness for measuring the relevance between nodes.
Esim \cite{shang2016esim} is a such kind of measure, but it still requires pre-defined meta-paths.
Recently, Graph Neural Network (GNN) has shown great potential in graph representation learning area \cite{kipf2017semisupervised,hamilton2017inductive,wang2019heterogeneous,yang2021interpretable}.
GNN can not only embed the graph structure information \cite{xu2018powerful,you2021identity,song2022bi} into the node embeddings, but also consider the semantic information \cite{hu2020heterogeneous} in embedding process.
These embeddings can be used for various downstream tasks  \cite{liu2021anomaly,jiang2021could,luo2020motif,yang2022semantically}. An earlier version of this work \cite{luo2021detecting} adopts the GNN to capture the semantic information and uses the learned embeddings for community detection in heterogeneous graphs.
Despite the success in many applications, GNN has not been applied to address the relevance measure problem in heterogeneous graphs. 

Motivated by the above, in this paper we propose a novel GNN-based relevance measure for heterogeneous graphs, which is called \ourmethod.
\ourmethod can automatically leverage the semantics in heterogeneous graphs without using pre-defined meta-paths and learn node embeddings used for the relevance measure.
Specifically, we first theoretically prove that GNN can simulate the meeting probability of \textit{pair-wise random walk}, which has been followed by many previous measure methods.
Based on the theoretical analysis, we propose a context path-based graph neural network (CP-GNN), which adopts the \textit{context path} to capture the semantics between nodes automatically.
Context path \cite{barman2019k} links two nodes with the same type via a sequence of nodes of auxiliary type. It can not only well capture the semantics in heterogeneous graphs, but also avoid selecting meta-paths manually by domain experts.

Besides, we introduce the \textit{relation attention} module in \ourmethod, which calculates the attention score of each relation, so the contributions of different relations are well differentiated. In other words, \ourmethod not only avoids using meta-paths, but also well captures the semantics between nodes preserved by context paths with different importance.
Moreover, to support relevance measures between two nodes of any type, we present the \textit{relation message passing} mechanism that extends the context path to measure the relevance under asymmetric context paths, whose effectiveness has been shown by both theoretical analyses and empirical study.
In addition, we design a \textit{type-length attention} module to capture the fine-grained importance of context paths of different lengths. Last, we adopt the framework of supervised contrast learning to optimize \ourmethod, so that the relative relevance between nodes can be well-preserved.

In summary, our principal contributions are as follow:
\begin{itemize}
    \item We propose a novel GNN-based relevance measure method, called \ourmethod, for heterogeneous graphs. To our best knowledge, this is the first relevance measure for heterogeneous graphs that exploits GNN.
    
    \item We theoretically show that GNN can be used to measure the relevance effectively, based on which a context path-based GNN (CP-GNN) is designed for capturing the semantics of relevance measure.
    
    \item We conduct extensive experiments on four real-world heterogeneous graphs, and the results on tasks like relevance search and community detection demonstrate the superior performance of \ourmethod.
\end{itemize}

An earlier version of this work was published in the conference CIKM'2021 \cite{luo2021detecting}, where we proposed the CP-GNN for community detection in heterogeneous graphs. In this paper, we extend the CP-GNN to measure the relevance in heterogeneous graphs and provide theoretical analysis as well as extensive experiments to guarantee effectiveness.
The paper is organized as follows. We review the related works in Section \ref{sec:relatedworks}. We describe the related notations and problem definition in Section \ref{sec:pre}. Section \ref{sec:methodology} discusses our proposed \ourmethod. We report the experimental results in Section \ref{sec:experiments} and conclude in Section \ref{sec:conclusion}.
\section{Related work}
\label{sec:relatedworks}

\subsection{Similarity and Relevance Measures}

Similarity measure, aiming to calculate the similarity between objects of the same type, has been studied for decades \cite{zhao2009p,rothe2014cosimrank,li2022one}. Previous methods often utilize the structure to measure the similarity. 
For example, SimRank \cite{jeh2002simrank} calculates the similarity of two nodes by recursively averaging the similarity of their neighbors.
Personalized PageRank (PPR) \cite{jeh2003scaling} measures the similarity by using the probability of random walks starting from the source node to the target node. 
    %Following, P-rank\cite{zhao2009p} and CoSimRank\cite{rothe2014cosimrank} jointly adopt the neighbors similarity and PPR for similarity measure. 
However, since these approaches only consider objects of the same type, they cannot be directly used in heterogeneous graphs.
    
To extend the application to heterogeneous graphs, a few relevance measures have been developed.
For example, PathSim\cite{sun2011pathsim} introduces the concept of meta-path-based similarity by calculating possible meta-path instances, but it can only use symmetric meta-paths to measure the similarity between nodes of the same type.
To measure the relevance of different-type objects, HeteSim \cite{shi2014hetesim} computes the meeting probability of two nodes following a given meta-path.
RelSim \cite{wang2016relsim} further introduces the latent semantic relation (LSR) by combining weighted meta-paths to catch different semantics for relevance measure.
SCHAIN\cite{li2017schain} and CMOC-AHIN\cite{zhou2020cmocahin} calculate the relevance as a weighted sum of all attributes and meta-paths.
However, all the aforementioned methods need user-defined meta-paths. Besides, HeteSim and PathSim can use only one meta-path at a time, which does not fully excavate the semantics of different meta-paths. In addition, when multiple meta-paths are used, how to properly assign their weights is also a challenging task. 

Recently, some meta-path-free measures have been studied. nSimGram\cite{conte2018nsimgram} makes use of q-gram and applies the Bray-Curits similarity index to measure the relevance, but it can only be applied on the labeled heterogeneous graphs.
HowSim\cite{wang2020effective} is another meta-path-free method that aggregates similarities over relations to catch different semantics, but it needs the two query nodes to be of the same type, thus cannot handle our relevance measure problem.

\subsection{Graph Neural Network}
Conventional graph representation learning (e.g., DeepWalk \cite{perozzi2014deepwalk}, LINE \cite{tang2015line}, Node2Vec \cite{grover2016node2vec}, and Metapath2vec \cite{dong2017metapath2vec}) aims at mapping node into a low-dimensional vector, which preserves the original graph information and can be used for various tasks.
For instance, ESim \cite{shang2016esim} learns node embeddings under the guidance of meta-paths and uses the embeddings for relevance measure in heterogeneous graphs.
    
Recently, Graph Neural Network (GNN) has been shown powerful for learning node representation \cite{xu2022uncertainty}. The key idea of GNN is to aggregate information from node’s neighbors via neural networks \cite{gnn_tnn09}.
For example, GAT \cite{velivckovic2017graph} applies the attention mechanism on each node and its neighbors, which can help the center node aggregate information from the important neighbors. In this way, GNN can embed both the structure and feature information into the node embedding as discussed in previous works \cite{xu2018powerful,you2021identity}.
GNN has also been studied on heterogeneous graphs. For example, MEIRec \cite{fan2019metapath} is a meta-path guided heterogeneous GNN that learns the embeddings of objects for intent recommendation. HAN \cite{wang2019heterogeneous} adopts the meta-path to leverage the semantic information and uses the attention mechanism to differentiate them. MAGNN \cite{fu2020magnn} further aggregates information along the meta-path to incorporate fine-grained semantics. ie-HGCN \cite{yang2021interpretable} designs a hierarchical aggregation
architecture to automatically extract useful meta-paths, which presents good interpretability and model efficacy.
However, all these methods still require the meta-path, which largely limits their applications. 
    
Although some recent GNN-based models (e.g., HGT \cite{hu2020heterogeneous}) have considered the semantics in heterogeneous graphs, none of them is designed for relevance measurement.
On the other hand, the GNN-based embedding models have been shown more powerful than traditional similarity metric-based embedding models, which can be designed by applying cosine similarity \cite{singhal2001cosine}, Jaccard coefficient \cite{levandowsky1971jaccard}, and the p-norm distance \cite{duren1970pnorm} on them.

Motivated by the above, in this paper we propose a GNN-based relevance measure, namely \ourmethod, which is not only meta-path-free but also universal that can be used on different-typed objects, allowing us to measure the node relevance in heterogeneous graphs in an end-to-end manner.
\section{Preliminary}
\label{sec:pre}

In this section, we first briefly introduce the data model of the heterogeneous graph, and then present some key concepts and the problem that we study in this paper.

\begin{myDef}[\textbf{Heterogeneous graph}]
The heterogeneous graph is defined as a graph $\mathcal H =(\mathcal{V},\mathcal{E},\mathcal{A}, \mathcal{R})$ with a node mapping function $\phi(v): \mathcal{V}\to \mathcal{A}$ and an edge mapping function $\psi(e):\mathcal{E}\to \mathcal{R}$,
where $|\mathcal{A}| + |\mathcal{R}| > 2$, each node $v\in\mathcal V$ belongs to a node type $\phi(v)\in\mathcal A$, and each edge $e\in\mathcal E$ belongs to an edge type (also called relation) $\psi(e)\in\mathcal R$.
\end{myDef}

\begin{myDef}[\textbf{Meta-path \cite{sun2011pathsim}}]
Given a heterogeneous graph $\mathcal H$, the meta-path is a path with the form $\mathcal{P} = A_0 \xrightarrow{R_1} A_1 \xrightarrow{R_2}\cdots \xrightarrow{R_{k}} A_{k}$, where $A_i\in\mathcal A$, $R_i\in\mathcal R$ ($1\leq i\leq k$), and $R_1 \circ R_2 \circ \cdots \circ R_{k}$ defines the composite relation between node type $A_0$ to node type $A_{k}$. The length of the meta-path is the number of composite relations $|R_1 \circ R_2 \circ \cdots \circ R_{k}|$.
\end{myDef}

\begin{myDef}[\textbf{Context path \cite{barman2019k}}]\label{def:contextpath}
Given a heterogeneous graph $\mathcal H$, a {\it context path} is a path connecting two nodes $v_i$ and $v_j$ with the same type $A$, formulated as $\rho^k=\{ v_i, R^{k-1}, v_j \}$. The $R^{k-1}$ is any path connecting $v_i$ and $v_j$ that contains $k-1$ ($k\geq1$) nodes of \textit{auxiliary types} $\mathcal{A}'=\mathcal{A}\backslash\{A\}$, where nodes in $R^{k-1}$ is also denoted as the \textit{auxiliary nodes}. The length of a context path is $k$ (when $k=1$, $R^k=\emptyset$).
\end{myDef}

\begin{figure}[]
    \centering
        \includegraphics[trim={0 0 0 4},width=0.7\columnwidth,clip]{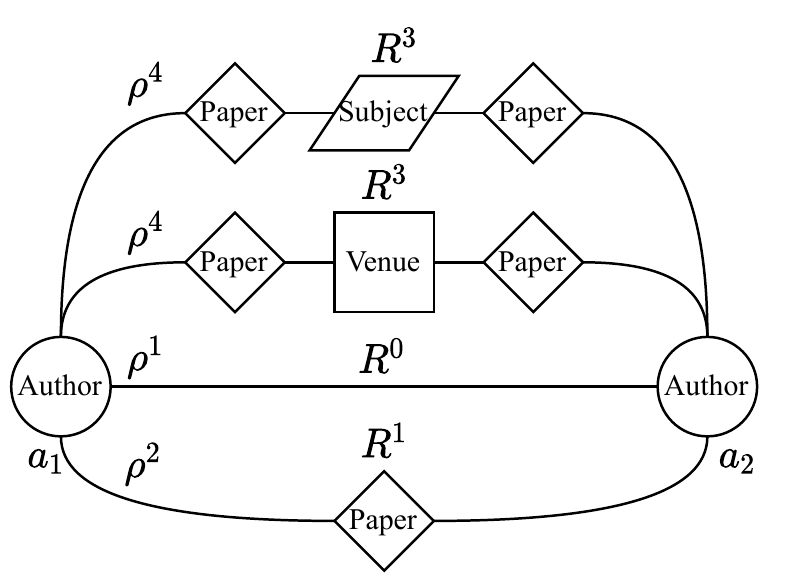}
    \caption{An example of four context paths between $a_1$ and $a_2$.}
    \label{fig:ContextPath}
\end{figure}

\begin{example}
Fig. \ref{fig:ContextPath} depicts four possible context paths $\rho^*$ of different lengths that connect authors $a_1$ and $a_2$, where $R^*$ denotes the auxiliary nodes that constitute the path.
\end{example}

\noindent\textbf{Difference between context path and meta-path.} Intuitively, different semantic relationships revealed from the context path come from different auxiliary nodes. The purpose of the meta-path is to manually define the combination of auxiliary nodes in the context path. However, the number of combinations explodes exponentially as the node types and path length increase. Thus, the context path relaxes the restriction of the auxiliary nodes. Given a length $k$, the context path contains all the possible $k$-order relationships. For example, in Fig. \ref{fig:ContextPath}, two meta-paths \textit{Author-Paper-Subject-Paper-Author} and \textit{Author-Paper-Venue-Paper-Author} can both be represented by a 4-length context path.

Besides, specifying an integer value of $k$ is much easier than defining the meta-path. This is because the number of meta-paths of different node/edge types grows exponentially as the meta-path length increases, while the number of possible values of $k$ is rather limited since the average length of the shortest paths between any two nodes in real-world networks is often between 4 to 6, according to \cite{ye2010distance}.

Noticeably, the original definition of context path only considers the path between nodes of the same type, which cannot measure nodes of heterogeneous types. Therefore, we propose the \textit{generalized context path} to extend the CP-GNN model in Section \ref{sec:cp-gnn+}, which is defined as follows:
\begin{myDef}[Generalized context path]\label{def:gcontextpath}
    Generalized context path relaxes the same type constraint for nodes $v_i,v_j$ in the contex path $\rho^k=\{v_i, R^k, v_j\}$. Thus, it can capture the semantics between nodes of different types.
\end{myDef}

\begin{problem}
Given a heterogeneous graph $\mathcal H$ and two nodes $v_i$ and $v_j$ with any type in $\mathcal H$, we want to design a measure $S(v_i, v_j)$ such that it can accurately capture the relevance between $v_i$ and $v_j$ in $\mathcal H$.
\end{problem}
\section{Our Approach}
\label{sec:methodology}

In this section, we first discuss how GNN can measure the relevance on graph. Then, we propose a context path-based graph neural network (CP-GNN) to automatically leverage the semantics in heterogeneous graphs. Finally, we propose the CP-GNN+ to support relevance measures between any type of object.

\subsection{GNN for relevance measure}\label{sec:theoreticalanalysis} 
Previous measure methods (e.g., SimRank \cite{jeh2002simrank} and HeteSim \cite{shi2014hetesim}) are based on the theory of \textit{pair-wise random walk}.
\begin{myDef}[\textbf{Pair-wise random walk (PRW) \cite{jeh2002simrank}}]\label{def:prw}
    Pair-wise random walk $\text{PRW}(v_i, v_j)$ measures the probability that two random walks starting from $v_i$ and $v_j$ meet at the same node on the graph. This can be formulated as 
    \begin{equation}
        \label{eq:prw}
        PRW(v_i, v_j|\pi^k) = \sum_{\pi^k}p(v_i,v_j|\pi^k),
    \end{equation}
    where $\pi^k: \{v_i,\cdots,v_m,\cdots v_j\}$ denotes arbitrary $k$-length walks (path contains $k$ edges) that $v_i,v_j$ meet at intermediate node $v_m$. The higher the probability, the more likely the two nodes are similar.
    
    The $\pi^k$ can be seen as the composition of two walks $\pi_{i:m}, \pi_{j:m}$ respectively starting from $v_i$ and $v_j$ and ending at $v_m$. Thus, the $p(v_i,v_j|\pi^k)$ can be written as
    \begin{equation}
        \label{eq:prw_pro}
        p(v_i,v_j|\pi^k)=p(v_m|v_i, \pi_{i:m})p(v_m|v_j, \pi_{j:m})
    \end{equation}
    where $p(v_m|v_*, \pi_{*:m})$ denotes the probability of $v_*$ visits at node $v_m$ under path $\pi_{*:m}$. Following the idea of random walk, each node will randomly visit one of its neighbors, we formulate $p(v_m|v_*, \pi_{*:m})$ as
    \begin{equation}
        p(v_m|v_*, \pi_{*:m}) = \prod_{w_l\in\pi_{*:m}} \frac{1}{O(w_l)},
    \end{equation}
    where $O(w_l)$ denotes the out degree of $l$-th node in $\pi_{*:m}$.
\end{myDef}
% Clearly, the PRW (Equation \ref{eq:prw} of ) is quite similar to the context path probability (Equation \ref{eq:contextprobability}).

HeteSim, to capture the semantics in heterogeneous graphs, uses the meta-path $\mathcal{P}$ as constraint, which measures how likely $v_i$ and $v_j$ will meet at the same node when they travel along the meta-path. Given a meta-path $\mathcal{P}= A_0 \xrightarrow{R_1} A_1 \xrightarrow{R_2}\cdots \xrightarrow{R_{k}} A_{k}$, it can be formulated as
\begin{equation}
    \label{eq:hetesim}
    \text{HeteSim}(v_i, v_j|\mathcal{P}) = \sum_{\varphi^k \in \mathcal{P}}p(v_i,v_j|\varphi^k),
    % \resizebox{\columnwidth}{!}{$
    % \begin{split}
    % \text{HeteSim}(v_i, v_j|\mathcal{M})=\frac{1}{O(v_i|R_1)O(v_j|R_k)}\sum_{v'_i\in N_{R_1}(v_i)}\sum_{v'_j\in N_{R_k}(v_j)}\\\text{HeteSim}(v'_i,v'_j|A_1\xrightarrow{R_2}A_2\cdots\xrightarrow{R_{k-1}}A_{k-1}),
    % \end{split}
    % $}
\end{equation}
where $\varphi^k$ denotes the $k$-length meta-path instance constrained by $\mathcal{P}$. Equation \ref{eq:hetesim} shows that HeteSim needs to iterate over all possible meta-path instances and sum up the relevance.

Extending the theory of pair-wise random walk, we propose the theorem that GNN can simulate pair-wise random walk and measure the relevance between two nodes $v_i$ and $v_j$. 
\begin{theorem}\label{thm:gnn_prw}
    The inner-product of two node representations $h_i^k,h_j^k$ generated by a $k$-layer GNN is equal to the probability of pair-wise random walk under $2k$-length paths. This can be formulated as
    \begin{align}
        &H^k = \text{k-layer GNN}(Z),\\
        &PRW(v_i, v_j|\pi ^{2k}) = \Braket{h_i^k, h_j^k},
    \end{align}
    where $Z$ denotes the initial node embedding, $h_i^k, h_j^k\in H^k$, and $\Braket{\cdot, \cdot}$ denotes the inner-product operation.
\end{theorem}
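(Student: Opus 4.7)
The plan is to unroll a $k$-layer GNN into a probability-weighted sum over $k$-length random walks, and then observe that the inner product of two such sums is exactly the PRW expression in Equations~\eqref{eq:prw}--\eqref{eq:prw_pro}.

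First, I would fix the GNN formulation as a linear mean-aggregating layer, $h_v^{(l+1)} = \frac{1}{O(v)} \sum_{u \in N(v)} h_u^{(l)}$, with $h_v^{(0)} = z_v$. A straightforward induction on $l$ then yields the key identity $h_v^{(k)} = \sum_{w} \sum_{\pi_{v:w}^k} p(w \mid v, \pi_{v:w}^k)\, z_w$, where the inner sum ranges over all $k$-length walks from $v$ to $w$, and $p(w \mid v, \pi_{v:w}^k)$ is precisely the walk probability from Definition~\ref{def:prw}. Intuitively, each additional layer appends one more step to the walk, and the $1/O(\cdot)$ prefactors chain together into exactly that probability.

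Next, taking the inner product and using the assumption that the initial embedding matrix $Z$ is orthonormal (for instance a one-hot encoding of nodes, the canonical choice in such equivalences) collapses $\langle z_{w_1}, z_{w_2}\rangle$ to $\mathbbm{1}\{w_1 = w_2\}$. The double sum then reduces to $\sum_w \sum_{\pi_{i:w}^k,\, \pi_{j:w}^k} p(w\mid v_i, \pi_{i:w}^k)\, p(w\mid v_j, \pi_{j:w}^k)$. Any $2k$-length meeting walk $\pi^{2k}$ decomposes uniquely into a pair of $k$-length half-walks that share the meeting node $w$, and Equation~\eqref{eq:prw_pro} states that the joint probability factors as the product of the two half-walk probabilities. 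Matching term-by-term gives $\text{PRW}(v_i, v_j \mid \pi^{2k}) = \langle h_i^k, h_j^k\rangle$, which closes the argument.

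The main obstacle is delimiting the class of GNNs to which the theorem actually applies. The identification is exact only for the simple mean-aggregation rule above with orthonormal initial features; once one admits learnable weight matrices, nonlinear activations, or self-loops, the literal walk-probability interpretation breaks, and the theorem is best read as saying that a $k$-layer GNN is \emph{expressive enough} to realize PRW under $2k$-length paths rather than that every GNN instantiation computes it. Stating the GNN update and the orthonormality of $Z$ precisely, and clarifying that more general GNN architectures inherit the same relevance-measuring capability by reduction to this base case, is the delicate part of the proof; the rest reduces to bookkeeping on walk decompositions.
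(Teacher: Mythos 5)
Your proposal is correct and follows essentially the same route as the paper: both reduce the GNN to the linear form $H^k=(D^{-1}A)^k Z$ with one-hot $Z$, prove by induction on the layer that the $l$-th entry of $h_i^k$ equals the $k$-step visiting probability $p(v_l\mid v_i,\pi^k)$, and then collapse the inner product over the shared meeting node to recover $\mathrm{PRW}(v_i,v_j\mid\pi^{2k})$. Your closing caveat about nonlinearities and learnable weights is exactly the restriction the paper imposes implicitly by setting $\sigma$ and $W^*$ to the identity, so there is no substantive difference in approach.
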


% We first propose the lemma \ref{lem:gnn_prw}.
% \begin{lemma}\label{lem:gnn_prw}
%     Each entry of node representation $h^k_i[l]$ generated by $k$-layer GNN  denotes the probability of node $v_i$ visits $v_l$ under $k$-length path written as $p(v_l|v_i, \pi^{k})$.
% \end{lemma}

\begin{proof}
    Following the Equation \ref{eq:prw_pro}, a $2k$-length path $\pi^{2k}$ can be divided into two $k$-length walks, formulated as
    \begin{equation}
        \pi^{2k}=\{\pi_{i:m}^k,\pi^k_{m:j}\}.
    \end{equation}

    Thus, we can prove that each entry of the node representation $h_i^k[l]$ denotes the probability that node $v_i$ visits $v_l\in\mathcal{V}$ under $k$-length walks, written as $p(v_l|v_i, \pi^{k})$.

    A $k$-layer GNN can be written as
    \begin{equation}
        H^k = \sigma \Big(L\cdots\sigma\big(L\sigma(LZW^1)W^2\big)\cdots W^k\Big),
    \end{equation}
    where $Z$ denotes the initialized node embedding, $L$ denotes the Laplacian matrix, $\sigma(\cdot)$ denotes the activation function, and $W^*$ denotes the weight matrix.
    
    For simplicity, we can assign the $Z$ with the unique \textit{one-hot embedding}, where $Z\in\mathbb{R}^{|V|\times|V|}=\textbf{I}$, and use $D^{-1}A$ as the Laplacian matrix, where $D$ denotes the degree matrix, and $A$ denotes the adjacency matrix. The $D^{-1}A$ is also known as the random walk transition matrix, which means each node will walk to its neighbors with equal chance. The $\sigma(\cdot)$ and $W^*$ are also defined as the identity matrix $\textbf{I}$.
    % TODO: l=1 的时候相等（到每个节点概率都是1/d）,l=k相等，l=k+1也相等。

    When $k=1$, the $H^1$ can be formulated as
    \begin{equation}
        H^1 = \textbf{I}D^{-1}AZ\textbf{I} = D^{-1}AZ = D^{-1}A,
    \end{equation}
    where $H^1\in\mathbb{R}^{|V|\times|V|}$.
    Given a node $v_i$, each entry of its node representation can be written as
    \begin{equation}
        h^1_i[l]=\left\{ 
        \begin{aligned}
            &\frac{1}{O(v_i)}, &A[i][l]=1,\\
            &0, &else.
        \end{aligned}
        \right.
    \end{equation}
    Therefore, $h^1_i[l]$ denotes the probability that node $v_i$ directly transits to $v_l$, written as $p(v_l|v_i,\pi^1)$.

    \begin{figure*}[h]
        \centering
        \includegraphics[width=.9\textwidth]{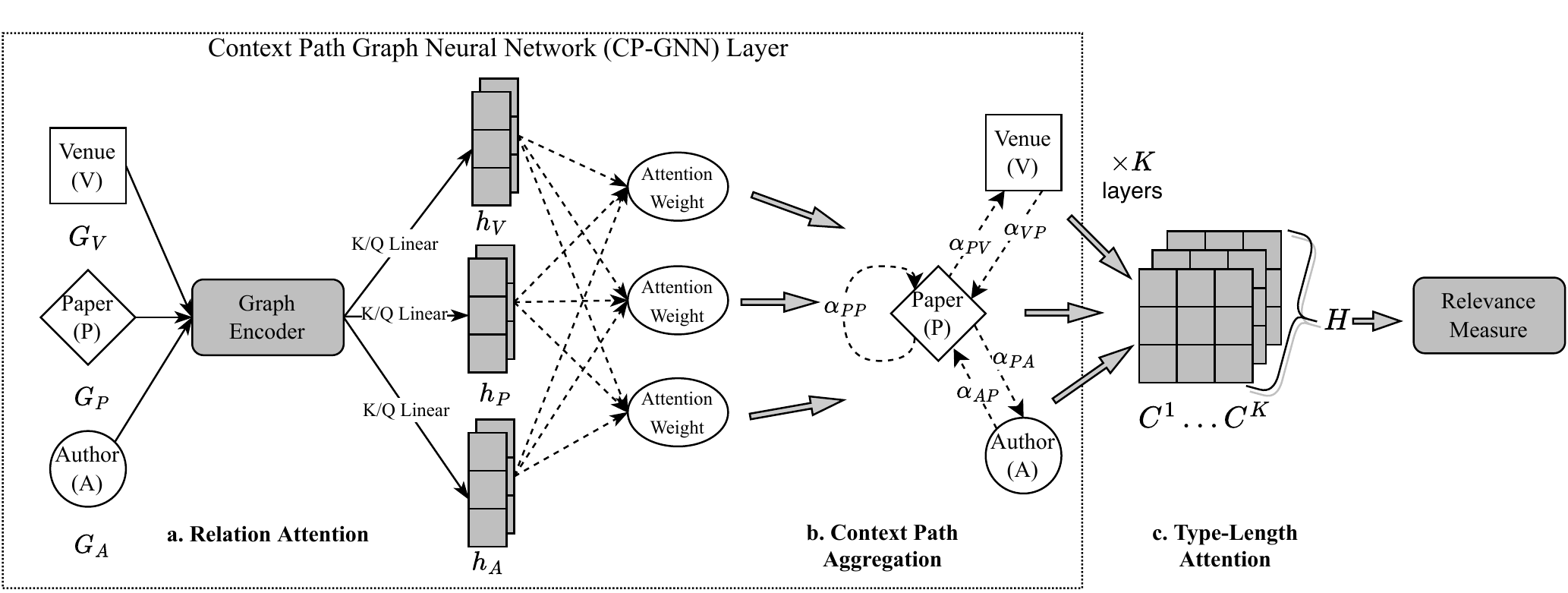}
        \caption{The overall framework of the CP-GNN. a. Relation attention assesses the importance of different relations by calculating the attention score. b. Context path aggregation aggregates the information to generate the context information vector. c. Type-length attention summarizes the vectors under different lengths and types to obtain the final node embeddings.}
        \label{fig:framework}
    \end{figure*}

    Assuming at layer $k-1$, $h^{k-1}_i[l]$ denotes the probability that node $v_i$ visits $v_l$ under $(k-1)$-length path, written as $p(v_l|v_i, \pi^{k-1})$. At $k$ layer, the $h^{k}_i[l]$ is formulated as
    \begin{equation}
        \label{eq:gnn_prw}
        \begin{aligned}
            h^{k}_i[l] &= \sum_{v'_l\in N(v_l)}\frac{1}{O(v'_l)} h^{k-1}_i[l']\\
            &= \sum_{v'_l\in N(v_l)} p(v_l|v'_l,\pi^1)p(v'_l|v_i, \pi^{k-1})\\
            &= p(v_l|v_i, \pi^{k}),
        \end{aligned}
    \end{equation}
    where $N(v_l)$ denotes the direct neighbors of $v_l$. From Equation \ref{eq:gnn_prw}, we can see that GNN summarizes the transition probability from neighbors to synthesize visit probability.

    Therefore, the inner production of $h^{k}_i,h^{k}_j$ can be formulated as 
    \begin{equation}
        \setlength\abovedisplayskip{2pt}%shrink space
        \setlength\belowdisplayskip{2pt}
        \begin{aligned}
            \label{eq:prw_prove}
            \Braket{h_i^k, h_j^k} &= \sum_{l=1}^{|\mathcal{V|}}h^{k}_i[l] h^{k}_j[l]\\
            &= \sum_{l=1}^{|\mathcal{V|}}p(v_l|v_i, \pi^{k})p(v_l|v_j, \pi^{k})\\
            &= \sum_{\pi^{2k}}p(v_i,v_j|\pi^{2k})\\
            &= PRW(v_i, v_j|\pi^{2k}).
        \end{aligned}
    \end{equation}
    From Equation \ref{eq:prw_prove}, we can see that $\Braket{h_i^k, h_j^k}$ summarizes the meeting possibility of every node in the graph, which is the same as the Definition \ref{def:prw}.
\end{proof}

From Theorem \ref{thm:gnn_prw}, we can see that by stacking $k$ layers, GNN can simulate the $k$-length random walk and inject the reaching possibility into the node embedding. Thus, the inner-product of any two embeddings summarizes the meeting possibility under $2k$-length paths, which is equal to the definition of the pair-wise random walk. 
Based on Theorem \ref{thm:gnn_prw}, we can define a GNN-based relevance measure method, which is formulated as
\begin{align}
    \setlength\abovedisplayskip{2pt}%shrink space
    \setlength\belowdisplayskip{2pt}
    &H^k = \text{k-layer GNN}(Z),\\
    &S(v_i, v_j) = \Braket{h_i^k, h_j^k}, h_i^k, h_j^k\in H^k.
\end{align}

\subsection{Context path-based graph neural network}

To extend the Theorem \ref{thm:gnn_prw} for heterogeneous graphs, we propose a novel Context Path-based Graph Neural Network (CP-GNN), which aims to learn node representations that are able to well capture the semantic information for relevance measure.
The overall framework of CP-GNN is depicted in Fig. \ref{fig:framework}.

% GNN has shown great power in the graph representation area, which could capture the graph structure information \cite{xu2018powerful,wang2020self}. 

As we discussed in section \ref{sec:pre}, the context path can well represent the semantic information in heterogeneous graphs. Therefore, we propose the CP-GNN that recursively embeds the semantics of each node into a context information vector $c^k$, which can be formulated as
\begin{equation}
    C^k = \text{k-layer CP-GNN}(Z),
\end{equation}
where $Z\in \mathbb{R}^{|V|\times d}$ denotes the initialized node embeddings, and $c^k \in C^k$ denotes each nodes's $k$-length context information vector. 

Instead of using the random walk transition matrix, CP-GNN consists of two major components (i.e., {\it relation attention} and {\it context path aggregation}) to control the transition probability. In this way, we can automatically differentiate the importance of different semantics in the context paths. Then, we propose a {\it type-length attention} mechanism to synthesize information of different length context paths for relevance measure.

\subsubsection{Relation Attention} 
Relation attention aims to calculate the attention score of each relation, so that the contributions of different relations are well differentiated.
We first use a graph encoder to encode the graph of each node type into a summary vector. After that, the attention score of each relation is calculated based on the graph summary vectors.

Given a node type $A$, the $G_A$ is denoted by $G_A$=$(\mathcal{V}_A, \mathcal{E}_A)$ where each node $v\in \mathcal{V}_A$ is of the node type $A$ and each edge $e \in \mathcal{E}_A \subseteq \{\mathcal{V}_A \times \mathcal{V}_A\}$. To enhance the model robustness, the graph encoder contains a node dropout mechanism that randomly drops nodes from the original graph. Then an averaging operation is adopted to calculate the graph summary vector $h_A$.
Although there exist several techniques to generate the graph summary vector $h_A$, the simple averaging operation demonstrates superior performance \cite{ren2019heterogeneous}, and thus $h_A$ is calculated as
%-----------------------
% \setlength\abovedisplayskip{1pt}%shrink space
    % \setlength\belowdisplayskip{1pt}
    \begin{align}
        \setlength\abovedisplayskip{2pt}%shrink space
        \setlength\belowdisplayskip{2pt}
        G_A' &= NodeDropout(G_A),\\
        h_A  &= Mean(C_A'),
    \end{align}
%-----------------------
where $C_A'$ denotes context information vectors of all the nodes in the remaining graph $G_A'$.

After calculating the $h_A$, for the $l$-th CP-GNN layer, we calculate the $h$-head attention score $\alpha_{S,T}^{h,l}$ for each relation $r_{S,T} \in \mathcal{R}$ by
%-----------------------
% \setlength\abovedisplayskip{1pt}%shrink space
%     \setlength\belowdisplayskip{1pt}
    \begin{align}
        \setlength\abovedisplayskip{2pt}%shrink space
        \setlength\belowdisplayskip{2pt}
        & h_S^l = GraphEncoder(C_S^{l-1}),\\
        & h_T^l = GraphEncoder(C_T^{l-1}),\\
        & \alpha_{S,T}^{h,l} = \mathop{Softmax}\limits_{S\in \mathcal{A}} \frac{Q^h(h_T^l)^\top K^h(h_S^l)}{\sqrt{d}},\\
        & Q^h(h_T^l) = QLinear_{T}^h(h_T^l),\\
        & K^h(h_S^l) = KLinear_{S}^h(h_S^l),
    \end{align}
%-----------------------
where $S$ and $T$ respectively denote the source and target node types in the relation $r_{S,T}$, $h_S^l$ and $h_T^l$ denote the graph summary vector of $G_S$ and $G_T$ at layer $l$ respectively, $C_S^{l-1}$ and $C_T^{l-1}$ are the context information vectors at the ($l-1$)-th layer, $QLinear$ and $KLinear$ are the linear projection functions that project the graph summary vectors to a Query vector and a Key vector. We want to learn more diverse importance of the relations, thus we adopt total $H$ different heads of relation attention with their own parameters to be learned during the training. $\alpha_{S,T}^{h,l}$ is the attention weight in head $h$ at layer $l$ for the relation $r_{S,T}$.

\subsubsection{Context Path Aggregation} 

Context path aggregation aims to aggregate the information along relations to generate the context information vectors for all nodes. After calculating the scores of different relationships, we aggregate the information for a node $v_i$ of type $T$ from its one-hop neighbors by adopting the widely used GNN message passing method. 

Assuming at layer $l$, we are going to obtain the context information vector $c_i^{l}$ for $v_i$ by aggregating the information from its neighbors along different relations.
We utilize its neighbors' context information vectors obtained at layer $l-1$, which can be formulated as
%-----------------------
\begin{equation}\label{eq:civ}
    \resizebox{\columnwidth}{!}{$
    c_i^{l} = W_2^{l}\Big(
        ||_{h\in[1,H]} \sigma(
                W_1^{l} \sum\limits_{r_{S,T} \in \mathcal{R}} \alpha_{S,T}^{h,l} \sum\limits_{v_j \in N_S(i)} c_j^{l-1} + B_1^{l}
            ) + B_2^{l}
        \Big),
    $}
\end{equation}
%-----------------------
where $N_S(i)$ denotes the adjacent neighbors of $v_i$ in graph $G_S$ for each relation $r_{S,T} \in \mathcal{R}$ relevant to node type $T$, $W_1^l$, $W_2^l$, $B_1^l$, and $B_2^l$ are the trainable parameters in the $l$-th layer, and $H$ is the number of attention heads. Note that finally we only use the embedding of nodes at the $k$-th layer to obtain $k$-length context information vectors $C^k$.

In order to get the information of $k$-length context paths, we stack $k$ CP-GNN layers to obtain the $C^k$ at layer $k$. The GRU mechanism \cite{cho2014learning} is also utilized to alleviate the over smoothing problem that unusually occurred in GNN model \cite{chen2020measuring}. This can be formulated as
%-----------------------
\begin{gather}
    \setlength\abovedisplayskip{2pt}%shrink space
    \setlength\belowdisplayskip{2pt}
    \hat{C}^l = \text{CP-GNN Layer}(C^{l-1}), l\in [1,k] \\
    C^{l} = \text{GRU}\Big(C^{l-1}, \hat{C}^l\Big).
    % \begin{split}
    %     C_P^0 &= Z_P\\
    %     C_A^0 &= Z_A^0 = EmbTransform(Z_P)\\
    %     C_P^{l} &= GRU\Big(Z_P^{l-1}, CP-GNNLayer(C_P^{l-1},C_A^{l-1})\Big),
    % \end{split}
\end{gather}
%-----------------------
Therefore, the final context information vector $c_i^k$ of each node in $\mathcal H$ can be taken from $C^k$.

\subsubsection{Relation Message Passing}\label{sec:cp-gnn+} 
From the Definition \ref{def:contextpath} of context path, we can see that CP-GNN can only measure the relevance between the nodes of the same type. To measure nodes of heterogeneous types, we can use the \textit{generalized context path}\footnote{We still use the term of \textit{context path} to denote \textit{generalized context path} in the following sections for simplicity.} in Definition \ref{def:gcontextpath} to extend CP-GNN. However, the context paths are often asymmetric between nodes of different types. For example, one possible context path between author and paper is {\it Author-Paper-Subject-Paper}. There are not intermediate nodes between \textit{Paper} and \textit{Subject} for walks meeting. Thus, we cannot adopt CP-GNN to capture such semantics. To address this challenge, we propose the \textit{relation message passing} mechanism and integrate it with CP-GNN to come up with the CP-GNN+.

Inspired by HeteSim \cite{shi2014hetesim}, we can simply add an intermediate node $E$ to each edge in the original graph. For example, path {\it Author-Paper-Subject-Paper} can be transformed into {\it Author-$E_{AP}$-Paper-$E_{PS}$-Subject-$E_{SF}$-Paper}. Thus, arbitrary context paths can be transformed into even-length paths, where walks can meet. We can prove that adding intermediate nodes would not change the relevance of nodes in the original graph.
\begin{theorem}
    \label{thm:intermediatenode}
    Adding intermediate nodes would not affect the relevance calculated by GNN. 
\end{theorem}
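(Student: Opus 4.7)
The plan is to reduce the claim to Theorem~\ref{thm:gnn_prw}, which equates the GNN-computed relevance to the pair-wise random walk meeting probability. It then suffices to show that the PRW meeting probability between original nodes is preserved (up to a uniform scaling, which does not change the ordering of relevance scores) when every edge is subdivided by a fresh intermediate node.

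First, I would fix the transformation precisely: each edge $(u,v)\in\mathcal E$ of the original graph $\mathcal H$ is replaced by a length-two path $u \to E_{uv} \to v$, yielding the transformed graph $\mathcal H'$. A useful structural observation is that $\mathcal H'$ is bipartite between original and intermediate nodes, so any walk starting at an original node visits original nodes exactly at the even time steps. This sets up the natural choice of using a $2k$-layer GNN on $\mathcal H'$ to compare against a $k$-layer GNN on $\mathcal H$, since by Theorem~\ref{thm:gnn_prw} the former corresponds to $4k$-length PRW and the latter to $2k$-length PRW, both of which place the meeting node on an original vertex.

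Second, I would set up a bijection between walks. Every $k$-length walk $\pi^k=(v_i,v_1,\ldots,v_{k-1},v_m)$ on $\mathcal H$ lifts to a unique $2k$-length walk $\pi'^{2k}=(v_i,E_{v_iv_1},v_1,\ldots,E_{v_{k-1}v_m},v_m)$ on $\mathcal H'$. Along the lifted walk, at each original node the transition probability is identical to the original graph (since the intermediate neighbors are in bijection with the original neighbors), and at each intermediate node the walker has degree $2$ and continues with probability $1/2$. Hence the lifted walk has probability $(1/2)^k$ times the original walk. Summing over meeting nodes $v_m$ and writing out the $4k$-length PRW on $\mathcal H'$ restricted to lifted paths reproduces $PRW(v_i,v_j\mid\pi^{2k})$ up to a global factor of $(1/2)^{2k}$, which is uniform in $(v_i,v_j)$ and therefore invisible to the relevance ranking.

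The main obstacle, and the step I expect to require the most care, is accounting for the extra $2k$-length walks on $\mathcal H'$ that \emph{backtrack} at an intermediate node (e.g.\ $u\to E_{uv}\to u$) and therefore do not correspond to any walk on $\mathcal H$. I see two clean ways to dispose of them: (i) follow the HeteSim convention~\cite{shi2014hetesim} of treating each subdivided edge as an oriented pair $u\to E_{uv}\to v$ so that the intermediate node deterministically forwards the walker, eliminating backtracking entirely and making the bijection an equality of probabilities; or (ii) observe that backtracking walks factor out of the PRW sum as a common multiplicative contribution (reflecting the degree-2 laziness at every intermediate node), and therefore leave the pairwise relevance unchanged up to the same uniform scaling. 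Either reduction, combined with Theorem~\ref{thm:gnn_prw}, yields the desired invariance of GNN-computed relevance under the intermediate-node transformation.
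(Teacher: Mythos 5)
Your main construction---lifting each $k$-length walk on $\mathcal H$ to a $2k$-length walk on the subdivided graph and invoking Theorem~\ref{thm:gnn_prw} with a $2k$-layer GNN---is essentially the paper's proof. The paper resolves the branching question at intermediate nodes exactly by your option (i): it stipulates that each intermediate node has out-degree $1$, so the walker is deterministically forwarded, every factor $1/O(w_l)$ contributed by an intermediate node equals $1$, and $p(v_l|v_i,\pi^{2k})$ on the subdivided graph equals $p(v_l|v_i,\pi^{k})$ on the original graph with no rescaling at all (not even your $(1/2)^{k}$ factor). With that convention your bijection of walks is exact, backtracking never arises, and the argument closes just as in the paper.

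Option (ii), however, is not a valid alternative, and you should not present it as one. If intermediate nodes are undirected with degree $2$, then two steps of the random walk on the subdivided graph, restricted to original nodes, realize the lazy transition matrix $\tfrac{1}{2}(I+P)$ with $P=D^{-1}A$ the original transition matrix: from $u$ the walker returns to $u$ with total probability $\tfrac{1}{2}$ and reaches each neighbor $v$ with probability $\tfrac{1}{2O(u)}$. A $2k$-layer GNN on this graph therefore computes inner products of rows of $\bigl(\tfrac{1}{2}(I+P)\bigr)^{k}=2^{-k}\sum_{j=0}^{k}\binom{k}{j}P^{j}$, which is a binomial mixture of meeting probabilities over all walk lengths $0,\dots,k$, not $2^{-k}P^{k}$ times a pair-independent constant. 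The backtracking walks do not ``factor out as a common multiplicative contribution''; they inject lower-order terms that depend on the pair $(v_i,v_j)$ and can change both the values and the ranking of the relevance scores. Only the oriented, out-degree-$1$ convention---your option (i), and the paper's choice---yields the exact invariance the theorem asserts.
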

\begin{proof}
    We can denote the set of added intermediate nodes by $\mathcal{V}_{new}$ and the graph after adding nodes by $\tilde{\mathcal{H}}$. The nodes in $\tilde{\mathcal{H}}$ are denoted by $\tilde{\mathcal{V}}=\mathcal{V}\cup \mathcal{V}_{new}$. We can extend the dimension of initialized node embedding to $|\tilde{\mathcal{V}}|$, and assign one-hot embeddings for $v_{new}$.

    Before adding intermediate nodes, the visit probability under $k$-length paths is written as 
    \begin{equation}
        p(v_l|v_i, \pi^{k}) = \prod_{w_l\in\pi^k} \frac{1}{O(w_l)}.
    \end{equation}
    After adding intermediate nodes, the $k$-length path is extended to $2k$-length.
    Since the out-degree of an intermediate node is 1, the probability under extended $2k$-path is written as
    \begin{align}
        \setlength\abovedisplayskip{2pt}%shrink space
        \setlength\belowdisplayskip{2pt}
        p(v_l|v_i, \pi^{2k}) &= \prod_{w_l\in\pi^{2k}} \frac{1}{O(w_l)}\\
        &= \prod_{w_l\in\pi^{2k} \wedge w_l \in\mathcal{V} }\frac{1}{O(w_l)}\prod_{w_l\in\pi^{2k} \wedge w_l' \in\mathcal{V}_{new} }\frac{1}{O(w_l')}\\
        &= \prod_{w_l\in\pi^{2k} \wedge w_l \in\mathcal{V} }\frac{1}{O(w_l)}\\
        &=p(v_l|v_i, \pi^{k}).
    \end{align}
    According to Theorem \ref{thm:gnn_prw}, we can simply use a $2k$-layer GNN on $\tilde{\mathcal{H}}$ to obtain the relevance in original graph.
\end{proof}

Though adding intermediate nodes would not affect the relevance, the storage complexity of the graph will increase from $O(|\mathcal{V}|+|\mathcal{E}|)$ to $O(|\mathcal{V}|+2|\mathcal{E}|)$. This is unacceptable for large-scale graphs.

To address the aforementioned challenge, we propose the \textit{relation message passing} mechanism, which would not increase the storage complexity. We introduce the Theorem \ref{thm:relationmessagepassing} to facilitate the analysis. 
\begin{theorem}\label{thm:relationmessagepassing}
    There exists an injective function $f$ that can represent each intermediate node $v_{e_{ij}}$ with its neighbor nodes, which can be formulated as
    \begin{equation}
        m_{e_{ij}} = f(z_i, z_j), v_{e_{ij} \in V_{new}}, v_i, v_j\in N(v_{e_{ij}}),
    \end{equation}
    where $v_{e_{ij}}$ denotes the added intermediate node for original edge $e_{ij}=(v_i, v_j)$, and $z_*$ denotes the node embedding.
\end{theorem}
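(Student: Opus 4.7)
The plan is to reduce the claim to a known injectivity result for neural-network-based multiset functions. Since each intermediate node $v_{e_{ij}}$ introduced by the transformation in Theorem \ref{thm:intermediatenode} has exactly two neighbors in $\tilde{\mathcal{H}}$, namely the endpoints $v_i,v_j$ of the original edge $e_{ij}$, the representation $m_{e_{ij}}$ must be a function of the unordered pair $\{z_i,z_j\}$. I would therefore look for an $f$ that is symmetric in its two arguments and injective on the countable collection of such pairs.

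First I would invoke the injectivity lemma behind GIN (Xu et al., 2018) and Deep Sets (Zaheer et al.): for any countable input domain $\mathcal{X}$ there exist maps $g:\mathcal{X}\to\mathbb{R}^d$ and $\phi:\mathbb{R}^d\to\mathbb{R}^{d'}$ such that $\phi\bigl(\sum_{x\in X}g(x)\bigr)$ is injective over all finite multisets $X\subset\mathcal{X}$. Applied to $\mathcal{X}=\{z_v:v\in\mathcal{V}\}$, which is finite hence countable, and restricted to multisets of size two, this yields
\[
f(z_i,z_j) \;=\; \phi\bigl(g(z_i)+g(z_j)\bigr),
\]
which is automatically symmetric in its arguments and, by the cited lemma, injective as a function of the unordered pair $\{z_i,z_j\}$. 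Recalling the one-hot initialization used in the proof of Theorem \ref{thm:gnn_prw}, the assignment $v\mapsto z_v$ is itself injective, so distinct original edges $e_{ij}\neq e_{i'j'}$ correspond to distinct pairs $\{z_i,z_j\}\neq\{z_{i'},z_{j'}\}$; combined with the injectivity of $f$ on pairs, this gives $m_{e_{ij}}\neq m_{e_{i'j'}}$ whenever $e_{ij}\neq e_{i'j'}$, establishing the theorem.

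The step I expect to be least routine is not the existence of $g$ and $\phi$, which is essentially folklore once the Deep Sets/GIN machinery is invoked, but pinning down precisely the domain on which $f$ must be injective. Specifically, one must argue that recovering an intermediate node from the unordered pair of its endpoints is lossless, i.e.\ no two distinct intermediate nodes share the same endpoint pair. This is immediate from the construction of $\tilde{\mathcal{H}}$ in Theorem \ref{thm:intermediatenode}, where exactly one new node is inserted per original edge, so the reduction to the multiset-injectivity result is clean and no further combinatorial argument is needed.
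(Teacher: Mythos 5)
Your proof is correct, but it takes a noticeably heavier route than the paper's. The paper's own argument is elementary and constructive: since Theorem \ref{thm:gnn_prw} initializes every original node with a distinct one-hot embedding, the sum $m_{e_{ij}} = z_i + z_j$ of two distinct standard basis vectors already determines the unordered pair $\{v_i, v_j\}$ uniquely, yielding at most $\tbinom{|\mathcal{V}|}{2}$ distinct intermediate-node representations; $f$ is simply addition, and no further machinery is needed. You instead invoke the Deep Sets / GIN multiset-injectivity lemma to obtain $f(z_i,z_j)=\phi\bigl(g(z_i)+g(z_j)\bigr)$, which is a valid existence argument and is strictly more general --- it works for any injective node-embedding assignment over a countable domain, not just one-hot vectors, and so anticipates the practical regime $d \ll |\mathcal{V}|$ that the paper only addresses after the proof (where it abandons plain addition and appeals to the universal approximation theorem to justify an MLP for $f$). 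What your approach gives up is constructiveness: $g$ and $\phi$ are only guaranteed to exist, whereas the paper exhibits $f$ explicitly in the idealized one-hot setting. Your closing observation that no two distinct intermediate nodes share an endpoint pair is the right thing to check and matches the paper's implicit assumption of one inserted node per original edge (neither you nor the paper handles parallel edges of different relation types between the same node pair, but that is outside the stated construction). Both arguments establish the theorem as stated.
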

\begin{proof}
    According to Theorem \ref{thm:intermediatenode}, added nodes with unique representation would not affect the relevance results. In Theorem \ref{thm:gnn_prw}, we assign the unique one-hot embedding for $v \in \mathcal{V}$. Additionally, each $v_{e_{ij}}$ only connects to nodes $v_i$ and $v_j$. Thus, by defining the $f$ as a simple add operation, we can generate at most $\tbinom{|\mathcal{V}|}{2}$ unique embedding vectors for each intermediate node. This can be formulated as 
    \begin{equation}
        \setlength\abovedisplayskip{2pt}%shrink space
        \setlength\belowdisplayskip{2pt}
        m_{e_{ij}} = z_i + z_j.\label{eq:int_add}
    \end{equation}
\end{proof}

Based on Theorem \ref{thm:intermediatenode} and \ref{thm:relationmessagepassing}, we introduce the \textit{relation message passing} mechanism, which is shown in Fig. \ref{fig:relationmessagepassing}. 
In relation message passing, we first synthesize the representation for intermediate node. Practically, the dimension of node embedding is $d<<|\mathcal{V}|$. The node representation cannot be a one-hot embedding. Thus, simply adding them cannot generate unique embeddings for intermediate nodes.
Besides, due to different types of nodes and edges, a simple adding operation in Equation \ref{eq:int_add} would ignore such semantic information. Based on the universal approximation theorem \cite{hornik1989multilayer}, we adopt the MLP to learn the mapping function $f$, which can be formulated as
\begin{equation}
    \setlength\abovedisplayskip{2pt}%shrink space
    \setlength\belowdisplayskip{2pt}
    m^{l-1}_{e_{ij}} = W_r(c^{l-1}_i||c^{l-1}_j), v_i,v_j \in N(v_{e_{ij}}),
\end{equation}
where we use the relation-specific parameter $W_r$ to learn mapping functions for different types of relations.

Then, we integrate the representation with context path aggregation to generate the final node embedding. This can be formulated as 
\begin{equation}
    \setlength\abovedisplayskip{2pt}%shrink space
    \setlength\belowdisplayskip{2pt}
    \resizebox{\columnwidth}{!}{$
    c_i^{l} = W_2^{l}\Big(
        ||_{h\in[1,H]} \sigma(
                W_1^{l} \sum\limits_{r_{S,T} \in \mathcal{R}} \alpha_{S,T}^{h,l} \sum\limits_{v_j \in N_S(i)} m^{l-1}_{e_ij} + B_1^{l}
            ) + B_2^{l}
        \Big).
    $}
\end{equation}
The relation message passing extends the context path and CP-GNN to measure the relevance between nodes of different types. Besides, it would not increase the storage complexity.

By using the relation message passing, we propose the CP-GNN+ for relevance measure in heterogeneous graphs. 
From Theorem \ref{thm:gnn_prw}, \ref{thm:intermediatenode}, and \ref{thm:relationmessagepassing}, we can easily have the following corollary.
\begin{corollary}\label{coro:cp_gnn+}
    A $k$-layer CP-GNN+ can simulate the probability that pair-wise random walk meets at $(k+1)$-length context path.
\end{corollary}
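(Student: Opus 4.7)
The plan is to chain together the three preceding theorems. By Theorem \ref{thm:intermediatenode} we may replace $\mathcal{H}$ with the augmented graph $\tilde{\mathcal{H}}$ obtained by inserting one intermediate node on every edge, without changing the PRW-based relevance; by Theorem \ref{thm:relationmessagepassing} the MLP $W_r$ realizes an injective map from $(c_i, c_j)$ to an intermediate embedding $m_{e_{ij}}$, so CP-GNN+ acting on $\mathcal{H}$ is a faithful simulation of a standard GNN running on $\tilde{\mathcal{H}}$; and by Theorem \ref{thm:gnn_prw} any standard GNN depth on $\tilde{\mathcal{H}}$ translates into a PRW meeting probability on walks of the corresponding length in $\tilde{\mathcal{H}}$. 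Combining these three ingredients and converting the $\tilde{\mathcal{H}}$-walk length back to a context-path length in $\mathcal{H}$ will yield the corollary.

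Concretely, I would first unroll one CP-GNN+ layer into two sub-operations: (i) the construction $m_{e_{ij}}^{l-1} = W_r(c_i^{l-1}\|c_j^{l-1})$, which is precisely a one-hop aggregation at each intermediate node of $\tilde{\mathcal{H}}$ from its two endpoints; and (ii) the update $c_i^l = \mathrm{agg}_{j \in N(i)} \alpha_{S,T}\, m_{e_{ij}}^{l-1}$, which is the complementary one-hop aggregation at each original node from its intermediate neighbors. This identifies $k$ CP-GNN+ layers with a bipartite sequence of aggregations on $\tilde{\mathcal{H}}$. Applying Theorem \ref{thm:gnn_prw} to this effective depth expresses $\langle c_i^k, c_j^k \rangle$ as a PRW meeting probability on walks of matching length in $\tilde{\mathcal{H}}$. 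Finally, I would use the bijection from the proof of Theorem \ref{thm:intermediatenode} (two edges of $\tilde{\mathcal{H}}$ per original edge of $\mathcal{H}$, with the intermediate nodes of out-degree one preserving the visit probabilities) to translate the walk length in $\tilde{\mathcal{H}}$ back into a context-path length of $k+1$ in $\mathcal{H}$, completing the statement.

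The main obstacle I anticipate is the careful bookkeeping of the layer-to-length correspondence. One must track how the alternation between the $m$-update and the $c$-update in CP-GNN+ maps onto the bipartite structure of $\tilde{\mathcal{H}}$, and in particular how the initial intermediate embedding $m_{e_{ij}}^0$, which is already computed from the raw features $c^0$, effectively contributes one extra half-hop on $\tilde{\mathcal{H}}$ that shifts the accumulated context-path length from $k$ to $k+1$. A secondary subtlety is to confirm that the injectivity provided by Theorem \ref{thm:relationmessagepassing} is strong enough to preserve the PRW interpretation once we leave the idealized one-hot regime and instantiate $W_r$ as a learned MLP in dimension $d \ll |\mathcal{V}|$; this will follow from the universal approximation argument already used in the proof of Theorem \ref{thm:relationmessagepassing}, but needs to be invoked explicitly so that the identification of $m_{e_{ij}}$ with a genuine intermediate-node embedding is valid.
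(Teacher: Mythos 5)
Your proposal takes essentially the same route as the paper: the paper offers no explicit derivation for this corollary, stating only that it follows from Theorems \ref{thm:gnn_prw}, \ref{thm:intermediatenode}, and \ref{thm:relationmessagepassing}, and your argument is precisely that chain made explicit (view relation message passing as a faithful simulation of a standard GNN on the augmented graph $\tilde{\mathcal{H}}$, apply Theorem \ref{thm:gnn_prw} there, and translate walk lengths back via the degree-one intermediate nodes). Your added bookkeeping of the initial $m^0_{e_{ij}}$ half-hop as the source of the $+1$ in the context-path length is a reasonable and welcome elaboration that the paper leaves implicit.
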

The Corollary \ref{coro:cp_gnn+} theoretically ensures the effectiveness of CP-GNN+.

\begin{figure}[t]
    \centering
    \includegraphics[trim=0 0cm 0cm 0cm,clip,width=0.65\columnwidth]{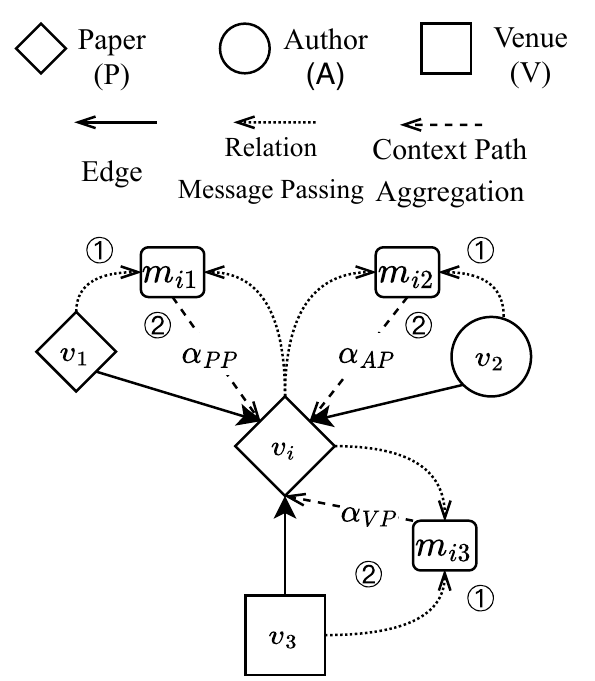}
    \caption{The illustration of relation message passing. It first synthesizes the representation for intermediate nodes, and then integrates the representation with context path aggregation to generate the final node embedding.}
    \label{fig:relationmessagepassing}
\end{figure}

\subsection{\ourmethod for relevance measure}
In this section, we introduce the process of \ourmethod for relevance measure.
Given a heterogeneous graph $\mathcal{H}$, we want to capture the semantics under different lengths. By defining a maximum length $K$, we obtain context information vectors of different lengths, which are formulated as
\begin{equation}
    \setlength\abovedisplayskip{2pt}%shrink space
    \setlength\belowdisplayskip{2pt}
    C^k = \text{k-layer CP-GNN+}(Z), k \in [1,K].
\end{equation}
Then, we propose a \textit{type-length attention} $\alpha_A^k, A\in \mathcal{A}$ to differentiate the importance of various lengths for each node type, which is formulated as 
\begin{align}
    \setlength\abovedisplayskip{2pt}%shrink space
    \setlength\belowdisplayskip{2pt}
    H_A&=\sum_{k=1}^K \alpha_A^k C_A^k, A\in \mathcal{A},\\
    H&=\{H_A|A\in \mathcal{A}\},
\end{align}
where the $\alpha_A^k$ is a learnable parameter that will be optimized in the training. The final relevance measure function $S$ is formulated as
\begin{equation}
    \setlength\abovedisplayskip{2pt}%shrink space
    \setlength\belowdisplayskip{2pt}
    S(v_i,v_j)=\sigma(\Braket{h_i, h_j}), h_i,h_j \in H,
    \label{eq:sim}
\end{equation}
where $H=\{H_A|A\in \mathcal{A}\}$, and $\sigma$ denotes the sigmoid function. 

To optimize the parameters in \ourmethod, we adopt the framework of supervised contrast learning \cite{khosla2020supervised}. We assume that nodes with the same labeled class should be relevant to each other, and vice versa. By contrasting nodes within the same class and different classes, we can force relevant nodes closer to each other, while pushing away the irrelevant nodes. This is formulated as
\begin{equation}
    \setlength\abovedisplayskip{2pt}%shrink space
    \setlength\belowdisplayskip{2pt}
    %\resizebox{\columnwidth}{!}{$
    \mathcal{L}_S=-\sum_{v_i\in \mathcal{V}}log \frac{\mathbb{E}\sum_{v_j\in I(v_i)}S(v_i,v_j)} {\mathbb{E}\sum_{v_j'\notin I(v_i)}1-S(v_i,v_j')},
    % \mathcal{L}_S=-\sum_{v_i\in \mathcal{V}}\sum_{v_j\in I(v_i)}log \big(S(v_i,v_j)\big) + \sum_{v_j'\notin I(v_i)}log \big(1-S(v_i,v_j')\big),
    %$}
\end{equation}
where $I(v_i)$ denotes a set of nodes sharing the same class of $v_i$. In addition, each node should be relevant to itself \cite{shi2014hetesim}. Thus, we propose the self-maximizing loss, which is formulated as
\begin{equation}
    \setlength\abovedisplayskip{2pt}%shrink space
    \setlength\belowdisplayskip{2pt}
    \mathcal{L}_U= trace\big(\sigma(H^{\top}H)\big) - \textbf{I}.\label{eq:self-max}
\end{equation}
The overall loss is formulated as
\begin{equation}
    \setlength\abovedisplayskip{2pt}%shrink space
    \setlength\belowdisplayskip{2pt}
    \label{eq:loss}
    \mathcal{L}=\mathcal{L}_S+\mathcal{L}_U.
\end{equation}

\subsection{Overall Algorithm}

The overall training process of \ourmethod is shown in Algorithm \ref{alg:gsim}. Given a heterogeneous graph, we first randomly initialize the node embeddings $Z$ (line 1), and assign all the type-length attention scores $\alpha^k_A$ to 1 (line 2). Then, we use a while-loop to finish the training process (lines 4-11). Specifically, we first adopt the K-layer CP-GNN+ to generate the context information vectors of different lengths $C^k$ (lines 5-7). After that, we use the type-length attention to summarize them into the final node embedding $H$ (lines 8-9). Finally, we use the relevance measure function in Equation \ref{eq:sim} to measure the relevance between nodes (line 10), and the parameters in GSim will be optimized by Equation \ref{eq:loss} (line 11).

In \ourmethod, the computation complexity of each layer GNN is $\mathcal{O}(4|\mathcal{E}|)$. Thus, the overall computation complexity is $\mathcal{O}(4K|\mathcal{E}|)$, where $K$ is the maximum length of context path. According to the statistic of average shortest paths in real-world networks \cite{ye2010distance}, the $K$ can be chosen between 4 to 6 empirically. Thanks to the relation message passing, the storage complexity of the graph is still $O(|\mathcal{V}|+|\mathcal{E}|)$.
%-----------------------------
\begin{algorithm}[tbp]
    \caption{The training process of \ourmethod}\label{alg:gsim}
    \KwIn {heterogeneous graph $\mathcal{H}=\{\mathcal{V},\mathcal{E},\mathcal{A},\mathcal{R}\}$, Maximum length $K$. Max epoch $ME$}
    \KwOut {Measurement function $S$}
        Randomly initialize the $Z$\;
        Initialize $\alpha_A^k\gets1, k\in [1,K], A \in \mathcal{A}$\;
        Epoch = 0\;
        \While{Epoch $<$ $ME$}{
            \For {$k=1,\ldots,K$}{
                $C^k = \text{k-layer CP-GNN+}(Z)$\;
            }
        $H_A=\sum_{k=1}^K \alpha_A^k C_A^k, A\in \mathcal{A}$\;
        $H=\{H_A|A\in \mathcal{A}\}$\;
        $S(v_i,v_j)=\sigma(\Braket{h_i, h_j}), h_i,h_j \in H$\;
        Optimize parameters using Equation \ref{eq:loss}\;
        Epoch ++\;
        }
    \Return {$S$}
\end{algorithm}
%-----------------------------

\section{Experiment}\label{sec:experiments}
To evaluate the performance of our method, we conduct experiments on four real-world heterogeneous graphs with two downstream tasks: relevance search and community detection.

\subsection{Dataset}
We choose three widely used real-world heterogeneous graph datasets, i.e., ACM \cite{wang2019heterogeneous}, DBLP \cite{gao2009graph}, IMDB \cite{wu2016explaining}, together with a schema-rich knowledge graph dataset AIFB \cite{ristoski2016collection} to evaluate the performance of \ourmethod and other baseline models.
%Because only the DBLP dataset provides labels for different types of nodes, we form two datasets (i.e., DBLP-A and DBLP-Multi) where we measure the relevance on author-only and multi-type nodes, respectively. For other datasets, we only use one type of labeled nodes for relevance measure. We report their statistics in Table \ref{tab:dataset}. The labeled node types are highlighted by $*$. More details of the datasets can be found in our previous paper.
We report their statistics in Table \ref{tab:dataset}, and discuss their details as follows.
\begin{itemize}

    \item ACM dataset \cite{wang2019heterogeneous} is a bibliographic information network with four types of nodes (i.e., Paper, Author, Subject, and Facility). The paper nodes are categorized into 3 classes, i.e., \textit{database}, \textit{wireless communication} and \textit{data mining}.

    \item DBLP dataset \cite{gao2009graph} is a monthly updated citation network consisting of four node types. Nodes are labeled with four classes, i.e., \textit{database} (DB), \textit{data mining} (DM), \textit{information retrieval} (IR) and \textit{machine learning} (ML).

    \item IMDB dataset \cite{wu2016explaining}  consists of four types of nodes. The movie nodes are labeled with three classes, i.e., \textit{Action}, \textit{Comedy}, and \textit{Drama}.

    \item AIFB dataset \cite{ristoski2016collection} is a knowledge graph dataset consisting of 7 types of nodes and 104 types of edges. We choose the ``Personen'' node that is labeled with four classes.  Due to the complexity of the graph, we do not provide detail illustration in Table \ref{tab:dataset}, and pre-define the meta-paths by ourselves.
\end{itemize}

Because only the DBLP dataset provides labels for different types of nodes, we form two datasets (i.e., DBLP-A and DBLP-Multi) where we measure the relevance on author-only and multi-type nodes. For other datasets, we only use one type of labeled nodes for relevance measure.

We adopt the meta-paths in ACM, DBLP, and IMDP defined by previous works to evaluate the meta-path-based methods. Due to the rich-schema property of AIFB, we do not define the meta-path by ourselves. Therefore, some meta-path-based methods cannot be evaluated on the AIFB dataset. Besides, since the unsupervised methods do not need the training data, to make a fair comparison between the unsupervised and supervised methods, the splits of the labeled nodes with 25\%/25\%/50\% for the training, validation, and testing.

\begin{table}[]
    \caption{Statistics of datasets. The labeled node type are highlighted by $*$.}
    \label{tab:dataset}
    \Huge
    \resizebox{\columnwidth}{!}{
        \begin{tabular}{@{}cccccc@{}}
            \toprule
            Dataset & Node type                          & \# Nodes    & Edge type           & \# Edges     & Meta-path \\
            \midrule
            ACM     & \tabincell{c}{Paper* (\textbf{P})                                                                 \\ Author (\textbf{A})\\ Subject (\textbf{S}) \\ Facility (\textbf{F})} & \tabincell{c}{12,499\\ 17,431\\ 73 \\ 1,804} & \tabincell{c}{Paper - Paper \\ Paper - Author \\ Paper - Subject \\ Author - Facility} & \tabincell{c}{30,789 \\ 37,055 \\ 12,499 \\ 30,424} & \tabincell{c}{PAP \\ PSP} \\
            \midrule
            DBLP    & \tabincell{c}{Author* (\textbf{A})                                                                \\ Paper* (\textbf{P}) \\ Conference* (\textbf{C}) \\ Term (\textbf{T}) } & \tabincell{c}{14,475 \\ 14,736 \\ 20 \\ 8,920} & \tabincell{c}{Author - Paper \\ Paper - Conference \\ Paper - Term} & \tabincell{c}{41,794\\ 14,736 \\ 114,624} & \tabincell{c}{APA \\ APCPA \\ APTPA} \\
            \midrule
            IMDB    & \tabincell{c}{Movie* (\textbf{M})                                                                 \\ Actor (\textbf{A})\\ Director (\textbf{D}) \\ Keyword (\textbf{K})} & \tabincell{c}{4,275 \\ 5,432\\ 2,083 \\ 7,313} & \tabincell{c}{Movie - Actor \\ Movie - Director \\ Movie - Keyword} & \tabincell{c}{12,831 \\ 4,181 \\ 20,428} & \tabincell{c}{MAM \\ MDM \\ MKM} \\
            \midrule
            AIFB    & 7 different types                  & Total 7,262 & 104 different types & Total 48,810 & -         \\
            \bottomrule
        \end{tabular}
    }
\end{table}

\subsection{Baselines}
To evaluate the effectiveness of our measure, we compare it with a list of the state-of-the-art node embedding methods (i.e., Node2vec \cite{grover2016node2vec}, Metapath2vec \cite{dong2017metapath2vec}, and HIN2vec \cite{fu2017hin2vec}) and GNN-based methods (i.e., GCN \cite{kipf2017semisupervised}, GAT \cite{velivckovic2017graph}, LGNN \cite{chen2019supervised}, HAN \cite{wang2019heterogeneous}, HGT \cite{hu2020heterogeneous}, and CP-GNN \cite{luo2021detecting}). Especially, HGT is a semi-supervised neural network model which adopts the transformer mechanism to differentiate the importance of relations. CP-GNN is the earlier version of our work that utilizes the context path to excavate semantics in heterogeneous graphs. We adopt Equation \ref{eq:sim} on the learned embeddings to calculate the relevance.

Except for graph embedding-based and GNN-based baselines, we also choose two traditional relevance measure methods (i.e., SimRank \cite{jeh2002simrank} and HeteSim \cite{shi2014hetesim}) for comparison.

\subsection{Parameters Settings}
We now briefly discuss the settings of model parameters.
For graph embedding-based approaches such as Node2vec and Metapath2vec, we respectively set the length of random walk to 100, the sampling window size to 5, the number of walks per node to 120, and the number of negative samplings to 5.
For GNN-based methods such as GCN, GAT, LGNN, HAN, and HGT, the number of graph convolution layers is set to 2, and their node features are first randomly initialized, and then updated during the model learning process. The dimension of node feature embedding for all compared methods is set to 128. We set these parameters following previous research \cite{dong2017metapath2vec,wang2019heterogeneous,hu2020heterogeneous}.

For our \ourmethod, the number of attention heads is set to 2, the dimension of the output vectors of K/Q-Linear components is set to 128, and the node dropout rate is set to 0.3. The maximum $K$ is set to 4. The Adam \cite{kingma2014adam} is adopted to optimize all models, and the learning rate is set to 0.05. We analyze the impact of these parameters in section \ref{sec:parameters}.

\subsection{Relevance Search Results}
Another application of relevance measure is the relevance search. In relevance search, given a query node, we want to find its top-$N$ relevant nodes. We adopt the average recall as the metric, which is computed as
\begin{equation}
    \setlength\abovedisplayskip{1pt}%shrink space
    \setlength\belowdisplayskip{1pt}
    recall@N = \frac{|r_i\in R\wedge f_I(r_i)=f_I(q)|}{N},
\end{equation}
where $q$ denotes the query node, $f_I(\cdot)$ denotes the class label of node, and $R$ denotes the set of top-$N$ relevant results.
In experiments, we randomly select 50 query nodes for each dataset, then we return top-10 nodes based on the measurement results.
The experiment results are shown in Table \ref{tab:avg-recall}.

From the results in Table \ref{tab:avg-recall}, we can find that \ourmethod outperforms other baselines on all datasets. However, comparing the GNN-based methods with the node embedding methods and conventional methods, we can see that GNN-based methods achieve inferior performance in the relevance search task. The possible reason could be that the GNN-based methods suffer from the over-smoothing problem \cite{chen2020measuring}. This means that the final node representations tend to converge to the same value, making them hard to be distinguished. Therefore, the relative relevance between nodes is erased, resulting in weak performance in relevance search.

As for \ourmethod, it adopts the supervised contrast learning to contrast nodes within the same class and different classes during training. This will force relevant nodes closer to each other while pushing away the irrelevant nodes. In this way, \ourmethod enables us to capture the relative relevance between nodes and achieve the best performance in relevance search.

\begin{table}[tbp]
    \center
    \caption{Performance of different measures on relevance search task.}
    \label{tab:avg-recall}
    \resizebox{0.9\columnwidth}{!}{%
        \begin{tabular}{@{}c|ccccc@{}}
            \toprule
            \multirow{2}{*}{Method} & ACM                        & DBLP-A         & DBLP-Multi     & \multicolumn{1}{l}{IMDB} & AIFB           \\ \cmidrule(l){2-6}
                                    & \multicolumn{5}{c}{Recall}                                                                               \\ \midrule
            SimRank                 & 0.450                      & {\ul 0.826}    & {\ul 0.850}    & 0.430                    & 0.412          \\
            Hetesim                 & {\ul 0.646}                & 0.358          & 0.367          & 0.410                    & -              \\ \midrule
            Node2Vec                & 0.450                      & 0.556          & 0.497          & {\ul 0.520}              & 0.330          \\
            Metapath2vec            & 0.468                      & 0.292          & 0.751          & 0.484                    & -              \\ \midrule
            GCN                     & 0.382                      & 0.312          & 0.354          & 0.382                    & 0.455          \\
            GAT                     & 0.436                      & 0.470          & 0.307          & 0.332                    & 0.396          \\
            LGNN                    & 0.528                      & 0.272          & 0.293          & 0.278                    & 0.594          \\
            HAN                     & 0.534                      & 0.286          & -              & 0.512                    & -              \\
            HGT                     & 0.504                      & 0.472          & 0.710          & 0.388                    & {\ul 0.626}    \\ \midrule
            CP-GNN                  & {\ul 0.646}                & 0.644          & -              & 0.454                    & 0.356          \\
            \ourmethod              & \textbf{0.782}             & \textbf{0.900} & \textbf{0.888} & \textbf{0.524}           & \textbf{0.664} \\ \bottomrule
        \end{tabular}%
    }
\end{table}

\subsubsection{Case Study 1: Find same type relevant nodes}
In this study, we illustrate the top-10 nodes of the same type as the query node. We first query the top-10 relevant movies for the movie ``Twilight'' in IMDB dataset, which is one of the most famous drama movies. The results are shown in Table \ref{tab:recall-imdb}. We respectively list the names, relevance scores, and labels of each movie found.

From the results, we can see that the results returned by \ourmethod are better than the other methods, of which the "Drama" movies are the most (i.e., 8/10). On the contrary, other baselines contain many ``Adventure'' and ``Action'' movies in their results. Specifically, SimRank returns the most diverse results containing movies of different genera. HeteSim returns ``The Twilight Sega: New Moon'' and ``The Twilight Saga: Eclipse'', which are the sequels to ``Twilight''. The possible reason is that these movies share similar actors, and such relations are captured by the MAM meta-path. But these actors might also act in movies of other genera, which deteriorates the search results. HGT tries to automatically discover the semantics, but the found semantics might not be suitable for the query node. Thus, it returns lots of ``Adventure'' movies and the results are not very distinctive. \ourmethod adopts the context path to excavate the semantics that is crucial to the query node. For example, the 4th and 6th results (``Blood and Chocolate'' and ``Blood Ties'') both contain horror and action scenes which are also the major themes of ``Twilight''. These underlying relations cannot be well revealed by the meta-path.

In Table \ref{tab:recall-dblp}, we try to find relevant authors for author ``Weidong Chen'' who is a researcher in the database area (DB). From the results, we can see that all the authors found by \ourmethod are in DB area, whereas other baselines return some authors in the information retrieval (IR), data mining (DM), and artificial intelligence (AI) areas. In addition, we can see that the first result returned by the HGT is not ``Weidong Chen''. This shows that HGT does not satisfy the self-maximizing property which is important in previous methods (SimRank and HeteSim). Thanks to the self-maximizing loss shown in Equation \ref{eq:self-max}, \ourmethod enables to make sure that each node is most relevant to itself.

\begin{table*}[tbp]
    \caption{Top-10 query results for movie:``Twilight" (label: Drama) on IMDB dataset. }
    \label{tab:recall-imdb}
    \resizebox{\textwidth}{!}{%
        \begin{tabular}{@{}c|ccc|ccc|ccc|ccc@{}}
            \toprule
            \multirow{2}{*}{Rank} & \multicolumn{3}{c|}{SimRank}                              & \multicolumn{3}{c|}{HeteSim (MAM)} & \multicolumn{3}{c|}{HGT} & \multicolumn{3}{c}{\ourmethod}                                                                                                                                                                                                                                                   \\ \cmidrule(l){2-13}
                                  & \multicolumn{1}{c|}{Movie}                                & \multicolumn{1}{c|}{Score}         & Label                    & \multicolumn{1}{c|}{Movie}                                & \multicolumn{1}{c|}{Score} & Label     & \multicolumn{1}{c|}{Author}          & \multicolumn{1}{c|}{Score} & Label     & \multicolumn{1}{c|}{Movie}                         & \multicolumn{1}{c|}{Score} & Label     \\ \midrule
            1                     & \multicolumn{1}{c|}{\textbf{Twilight}}                    & \multicolumn{1}{c|}{0.148}         & Drama                    & \multicolumn{1}{c|}{\textbf{Twilight}}                    & \multicolumn{1}{c|}{1.000} & Drama     & \multicolumn{1}{c|}{Trash}           & \multicolumn{1}{c|}{1.000} & Drama     & \multicolumn{1}{c|}{\textbf{Twilight}}             & \multicolumn{1}{c|}{1.000} & Drama     \\
            2                     & \multicolumn{1}{c|}{We Need to Talk About Kevin}          & \multicolumn{1}{c|}{0.055}         & Drama                    & \multicolumn{1}{c|}{The Twilight Saga: New Moon}          & \multicolumn{1}{c|}{0.667} & Drama     & \multicolumn{1}{c|}{Modern Problems} & \multicolumn{1}{c|}{1.000} & Adventure & \multicolumn{1}{c|}{Shallow Hal}                   & \multicolumn{1}{c|}{1.000} & Adventure \\
            3                     & \multicolumn{1}{c|}{Gangster Squad}                       & \multicolumn{1}{c|}{0.026}         & Action                   & \multicolumn{1}{c|}{The Twilight Saga: Eclipse}           & \multicolumn{1}{c|}{0.667} & Drama     & \multicolumn{1}{c|}{Unleashed}       & \multicolumn{1}{c|}{1.000} & Action    & \multicolumn{1}{c|}{The Bridges of Madison County} & \multicolumn{1}{c|}{1.000} & Drama     \\
            4                     & \multicolumn{1}{c|}{A Thin Line Between Love and   Hate}  & \multicolumn{1}{c|}{0.010}         & Adventure                & \multicolumn{1}{c|}{Zathura: A Space Adventure}           & \multicolumn{1}{c|}{0.333} & Action    & \multicolumn{1}{c|}{The Other Woman} & \multicolumn{1}{c|}{1.000} & Adventure & \multicolumn{1}{c|}{Blood and Chocolate}           & \multicolumn{1}{c|}{1.000} & Drama     \\
            5                     & \multicolumn{1}{c|}{Get Carter}                           & \multicolumn{1}{c|}{0.007}         & Action                   & \multicolumn{1}{c|}{What to Expect When You're Expecting} & \multicolumn{1}{c|}{0.333} & Adventure & \multicolumn{1}{c|}{Pain \& Gain}    & \multicolumn{1}{c|}{1.000} & Adventure & \multicolumn{1}{c|}{A Walk to Remember}            & \multicolumn{1}{c|}{1.000} & Drama     \\
            6                     & \multicolumn{1}{c|}{Girls Gone Dead}                      & \multicolumn{1}{c|}{0.005}         & Adventure                & \multicolumn{1}{c|}{Drinking Buddies}                     & \multicolumn{1}{c|}{0.333} & Adventure & \multicolumn{1}{c|}{Rio}             & \multicolumn{1}{c|}{1.000} & Adventure & \multicolumn{1}{c|}{Blood Ties}                    & \multicolumn{1}{c|}{1.000} & Drama     \\
            7                     & \multicolumn{1}{c|}{Jab Tak Hai Jaan}                     & \multicolumn{1}{c|}{0.005}         & Drama                    & \multicolumn{1}{c|}{The Ridiculous 6}                     & \multicolumn{1}{c|}{0.333} & Adventure & \multicolumn{1}{c|}{Guess Who}       & \multicolumn{1}{c|}{1.000} & Adventure & \multicolumn{1}{c|}{The Brothers Bloom}            & \multicolumn{1}{c|}{1.000} & Adventure \\
            8                     & \multicolumn{1}{c|}{Red Dog}                              & \multicolumn{1}{c|}{0.005}         & Adventure                & \multicolumn{1}{c|}{Welcome to the Rileys}                & \multicolumn{1}{c|}{0.333} & Drama     & \multicolumn{1}{c|}{Faithful}        & \multicolumn{1}{c|}{1.000} & Adventure & \multicolumn{1}{c|}{Titanic}                       & \multicolumn{1}{c|}{1.000} & Drama     \\
            9                     & \multicolumn{1}{c|}{Captain America: The First   Avenger} & \multicolumn{1}{c|}{0.004}         & Action                   & \multicolumn{1}{c|}{The Last Five Years}                  & \multicolumn{1}{c|}{0.333} & Adventure & \multicolumn{1}{c|}{Inside Out}      & \multicolumn{1}{c|}{1.000} & Adventure & \multicolumn{1}{c|}{Womb}                          & \multicolumn{1}{c|}{1.000} & Drama     \\
            10                    & \multicolumn{1}{c|}{Ironclad}                             & \multicolumn{1}{c|}{0.004}         & Action                   & \multicolumn{1}{c|}{On the Road}                          & \multicolumn{1}{c|}{0.333} & Drama     & \multicolumn{1}{c|}{Death Sentence}  & \multicolumn{1}{c|}{1.000} & Action    & \multicolumn{1}{c|}{Code 46}                       & \multicolumn{1}{c|}{1.000} & Drama     \\ \bottomrule
        \end{tabular}%
    }
\end{table*}

\begin{table*}[tbp]
    \caption{Top-10 query results for author: ``Weidong Chen'' (label: DB) on DBLP-A dataset.}
    \label{tab:recall-dblp}
    \resizebox{\textwidth}{!}{%
        \begin{tabular}{@{}c|ccc|ccc|ccc|ccc@{}}
            \toprule
            \multirow{2}{*}{Rank} & \multicolumn{3}{c|}{SimRank}               & \multicolumn{3}{c|}{HeteSim (APA)} & \multicolumn{3}{c|}{HGT} & \multicolumn{3}{c}{\ourmethod}                                                                                                                                                                                                                   \\ \cmidrule(l){2-13}
                                  & \multicolumn{1}{c|}{Author}                & \multicolumn{1}{c|}{Score}         & Label                    & \multicolumn{1}{c|}{Author}                & \multicolumn{1}{c|}{Score} & Label & \multicolumn{1}{c|}{Author}             & \multicolumn{1}{c|}{Score} & Label & \multicolumn{1}{c|}{Author}                & \multicolumn{1}{c|}{Score} & Label \\ \midrule
            1                     & \multicolumn{1}{c|}{\textbf{Weidong Chen}} & \multicolumn{1}{c|}{0.129}         & DB                       & \multicolumn{1}{c|}{\textbf{Weidong Chen}} & \multicolumn{1}{c|}{1.000} & DB    & \multicolumn{1}{c|}{Matthew Denny}      & \multicolumn{1}{c|}{1.000} & DB    & \multicolumn{1}{c|}{\textbf{Weidong Chen}} & \multicolumn{1}{c|}{0.982} & DB    \\
            2                     & \multicolumn{1}{c|}{Serge Rielau}          & \multicolumn{1}{c|}{0.007}         & DB                       & \multicolumn{1}{c|}{Serge Rielau}          & \multicolumn{1}{c|}{0.224} & DB    & \multicolumn{1}{c|}{Danette Chimenti}   & \multicolumn{1}{c|}{1.000} & DB    & \multicolumn{1}{c|}{Kurt Ingenthron}       & \multicolumn{1}{c|}{0.926} & DB    \\
            3                     & \multicolumn{1}{c|}{Ben Hutchinson}        & \multicolumn{1}{c|}{0.002}         & AI                       & \multicolumn{1}{c|}{Eliezer Levy}          & \multicolumn{1}{c|}{0.000} & DB    & \multicolumn{1}{c|}{Christian Zimmer}   & \multicolumn{1}{c|}{1.000} & IR    & \multicolumn{1}{c|}{Adam Silberstein}      & \multicolumn{1}{c|}{0.924} & DB    \\
            4                     & \multicolumn{1}{c|}{Eckhard D. Falkenberg} & \multicolumn{1}{c|}{0.002}         & DB                       & \multicolumn{1}{c|}{Richard Sidle}         & \multicolumn{1}{c|}{0.000} & DB    & \multicolumn{1}{c|}{Volker Linnemann}   & \multicolumn{1}{c|}{1.000} & DB    & \multicolumn{1}{c|}{David E. Bakkom}       & \multicolumn{1}{c|}{0.924} & DB    \\
            5                     & \multicolumn{1}{c|}{Hal R. Varian}         & \multicolumn{1}{c|}{0.002}         & IR                       & \multicolumn{1}{c|}{Roger Nasr}            & \multicolumn{1}{c|}{0.000} & IR    & \multicolumn{1}{c|}{Mihalis Yannakakis} & \multicolumn{1}{c|}{1.000} & DB    & \multicolumn{1}{c|}{Xin Luna Dong}         & \multicolumn{1}{c|}{0.923} & DB    \\
            6                     & \multicolumn{1}{c|}{Balder ten Cate}       & \multicolumn{1}{c|}{0.001}         & DB                       & \multicolumn{1}{c|}{Keizo Oyama}           & \multicolumn{1}{c|}{0.000} & IR    & \multicolumn{1}{c|}{Stephen C. North}   & \multicolumn{1}{c|}{1.000} & DM    & \multicolumn{1}{c|}{Pavel Avgustinov}      & \multicolumn{1}{c|}{0.923} & DB    \\
            7                     & \multicolumn{1}{c|}{Bennet Vance}          & \multicolumn{1}{c|}{0.001}         & DB                       & \multicolumn{1}{c|}{Hai Leong Chieu}       & \multicolumn{1}{c|}{0.000} & IR    & \multicolumn{1}{c|}{Cyril Goutte}       & \multicolumn{1}{c|}{1.000} & IR    & \multicolumn{1}{c|}{Jen-Yao Chung}         & \multicolumn{1}{c|}{0.921} & DB    \\
            8                     & \multicolumn{1}{c|}{Stephen J. Hegner}     & \multicolumn{1}{c|}{0.001}         & DB                       & \multicolumn{1}{c|}{Mark Coyle}            & \multicolumn{1}{c|}{0.000} & DB    & \multicolumn{1}{c|}{Armin B. Cremers}   & \multicolumn{1}{c|}{1.000} & IR    & \multicolumn{1}{c|}{H. V. Jagadish}        & \multicolumn{1}{c|}{0.921} & DB    \\
            9                     & \multicolumn{1}{c|}{Amit Ramesh}           & \multicolumn{1}{c|}{0.001}         & AI                       & \multicolumn{1}{c|}{Fabrizio Angiulli}     & \multicolumn{1}{c|}{0.000} & DM    & \multicolumn{1}{c|}{Bruno Defude}       & \multicolumn{1}{c|}{1.000} & IR    & \multicolumn{1}{c|}{Mary Tork Roth}        & \multicolumn{1}{c|}{0.920} & DB    \\
            10                    & \multicolumn{1}{c|}{Marc Spielmann}        & \multicolumn{1}{c|}{0.001}         & DB                       & \multicolumn{1}{c|}{Denver Dash}           & \multicolumn{1}{c|}{0.000} & AI    & \multicolumn{1}{c|}{Shanshan Wang}      & \multicolumn{1}{c|}{1.000} & DM    & \multicolumn{1}{c|}{Nancy D. Griffeth}     & \multicolumn{1}{c|}{0.920} & DB    \\ \bottomrule
        \end{tabular}%
    }
\end{table*}

\begin{table*}[tbp]
    \caption{Top-5 multi-type query results for author: ``Weidong Chen'' (label: DB) on DBLP-Multi dataset. (Due to the limitation of space, we replace the actual paper names with P1-P5.)}
    \label{tab:recall-DBLP-multi}
    \resizebox{\textwidth}{!}{%
        \begin{tabular}{@{}c|cccccc|cccccc@{}}
            \toprule
            Method & \multicolumn{6}{c|}{SimRank}    & \multicolumn{6}{c}{HeteSim}                                                                                                                                                                                                                                                                                                         \\ \midrule
            Rank   & \multicolumn{1}{c|}{Conference} & \multicolumn{1}{c|}{Label}     & \multicolumn{1}{c|}{Author}                & \multicolumn{1}{c|}{Label} & \multicolumn{1}{c|}{Paper} & Label & \multicolumn{1}{c|}{Conference (APC)} & \multicolumn{1}{c|}{Label} & \multicolumn{1}{c|}{Author (APA)}     & \multicolumn{1}{c|}{Label} & \multicolumn{1}{c|}{Paper (APAP)} & Label \\ \midrule
            1      & \multicolumn{1}{c|}{SDM}        & \multicolumn{1}{c|}{DM}        & \multicolumn{1}{c|}{Weidong Chen}          & \multicolumn{1}{c|}{DB}    & \multicolumn{1}{c|}{P1}    & DB    & \multicolumn{1}{c|}{VLDB}             & \multicolumn{1}{c|}{DB}    & \multicolumn{1}{c|}{Weidong Chen}     & \multicolumn{1}{c|}{DB}    & \multicolumn{1}{c|}{P1}           & DB    \\
            2      & \multicolumn{1}{c|}{EDBT}       & \multicolumn{1}{c|}{DB}        & \multicolumn{1}{c|}{Serge Rielau}          & \multicolumn{1}{c|}{DB}    & \multicolumn{1}{c|}{P2}    & DB    & \multicolumn{1}{c|}{SDM}              & \multicolumn{1}{c|}{DM}    & \multicolumn{1}{c|}{Serge Rielau}     & \multicolumn{1}{c|}{DB}    & \multicolumn{1}{c|}{P2}           & DB    \\
            3      & \multicolumn{1}{c|}{ECML}       & \multicolumn{1}{c|}{AI}        & \multicolumn{1}{c|}{Ben Hutchinson}        & \multicolumn{1}{c|}{AI}    & \multicolumn{1}{c|}{P3}    & IR    & \multicolumn{1}{c|}{EDBT}             & \multicolumn{1}{c|}{DB}    & \multicolumn{1}{c|}{Eliezer Levy}     & \multicolumn{1}{c|}{DB}    & \multicolumn{1}{c|}{P3}           & IR    \\
            4      & \multicolumn{1}{c|}{VLDB}       & \multicolumn{1}{c|}{DB}        & \multicolumn{1}{c|}{Eckhard D. Falkenberg} & \multicolumn{1}{c|}{DB}    & \multicolumn{1}{c|}{P4}    & DB    & \multicolumn{1}{c|}{ECML}             & \multicolumn{1}{c|}{AI}    & \multicolumn{1}{c|}{Richard Sidle}    & \multicolumn{1}{c|}{DB}    & \multicolumn{1}{c|}{P4}           & DB    \\
            5      & \multicolumn{1}{c|}{ICML}       & \multicolumn{1}{c|}{AI}        & \multicolumn{1}{c|}{Hal R. Varian}         & \multicolumn{1}{c|}{IR}    & \multicolumn{1}{c|}{P5}    & DM    & \multicolumn{1}{c|}{ICML}             & \multicolumn{1}{c|}{AI}    & \multicolumn{1}{c|}{Roger Nasr}       & \multicolumn{1}{c|}{IR}    & \multicolumn{1}{c|}{P5}           & DM    \\ \midrule
            Method & \multicolumn{6}{c|}{HGT}        & \multicolumn{6}{c}{\ourmethod}                                                                                                                                                                                                                                                                                                      \\ \midrule
            Rank   & \multicolumn{1}{c|}{Conference} & \multicolumn{1}{c|}{Label}     & \multicolumn{1}{c|}{Author}                & \multicolumn{1}{c|}{Label} & \multicolumn{1}{c|}{Paper} & Label & \multicolumn{1}{c|}{Conference}       & \multicolumn{1}{c|}{Label} & \multicolumn{1}{c|}{Author}           & \multicolumn{1}{c|}{Label} & \multicolumn{1}{c|}{Paper}        & Label \\ \midrule
            1      & \multicolumn{1}{c|}{VLDB}       & \multicolumn{1}{c|}{DB}        & \multicolumn{1}{c|}{Matthew Denny}         & \multicolumn{1}{c|}{DB}    & \multicolumn{1}{c|}{P1}    & DB    & \multicolumn{1}{c|}{WSDM}             & \multicolumn{1}{c|}{IR}    & \multicolumn{1}{c|}{Weidong Chen}     & \multicolumn{1}{c|}{DB}    & \multicolumn{1}{c|}{P1}           & DB    \\
            2      & \multicolumn{1}{c|}{SDM}        & \multicolumn{1}{c|}{DM}        & \multicolumn{1}{c|}{Danette Chimenti}      & \multicolumn{1}{c|}{DB}    & \multicolumn{1}{c|}{P2}    & DB    & \multicolumn{1}{c|}{ICML}             & \multicolumn{1}{c|}{AI}    & \multicolumn{1}{c|}{Kurt Ingenthron}  & \multicolumn{1}{c|}{DB}    & \multicolumn{1}{c|}{P2}           & DB    \\
            3      & \multicolumn{1}{c|}{EDBT}       & \multicolumn{1}{c|}{DB}        & \multicolumn{1}{c|}{Christian Zimmer}      & \multicolumn{1}{c|}{IR}    & \multicolumn{1}{c|}{P3}    & IR    & \multicolumn{1}{c|}{EDBT}             & \multicolumn{1}{c|}{DB}    & \multicolumn{1}{c|}{Adam Silberstein} & \multicolumn{1}{c|}{DB}    & \multicolumn{1}{c|}{P3}           & DB    \\
            4      & \multicolumn{1}{c|}{ECML}       & \multicolumn{1}{c|}{AI}        & \multicolumn{1}{c|}{Volker Linnemann}      & \multicolumn{1}{c|}{DB}    & \multicolumn{1}{c|}{P4}    & DB    & \multicolumn{1}{c|}{SDM}              & \multicolumn{1}{c|}{DM}    & \multicolumn{1}{c|}{David E. Bakkom}  & \multicolumn{1}{c|}{DB}    & \multicolumn{1}{c|}{P4}           & DB    \\
            5      & \multicolumn{1}{c|}{ICML}       & \multicolumn{1}{c|}{AI}        & \multicolumn{1}{c|}{Mihalis Yannakakis}    & \multicolumn{1}{c|}{DB}    & \multicolumn{1}{c|}{P5}    & DM    & \multicolumn{1}{c|}{SIGIR}            & \multicolumn{1}{c|}{IR}    & \multicolumn{1}{c|}{Xin Luna Dong}    & \multicolumn{1}{c|}{DB}    & \multicolumn{1}{c|}{P5}           & DB    \\ \bottomrule
        \end{tabular}%
    }
\end{table*}

\subsubsection{Case Study 2: Find different types relevant nodes}
In relevance measure, we need to evaluate the relevance of different types of nodes. Therefore, in DBLP-Multi dataset, given an author ``Weidong Chen'', we want to find the relevant conferences, authors, and papers for him. The top-5 results of different types of nodes are shown in Table \ref{tab:recall-DBLP-multi}. Due to the limitation of space, we replace the actual paper names with P1-P5.

From the results in Table \ref{tab:recall-DBLP-multi}, we can see that SimRank still achieves the worst results. The reason is that SimRank disregards the node type in heterogeneous graphs. Thus, it cannot find relations between different types of nodes. HeteSim behaves better than SimRank. But it requires defining different meta-paths for different types of nodes. It is infeasible when the number of node types increases. On the contrary, HGT can find the relationship between different types of nodes when calculating their relevance. But it fails to capture the asymmetric semantics, thus performing poorly for the relevance measure between \textit{Author-Paper}. \ourmethod adopts the relation message passing, which could comprehensively capture both the asymmetric and symmetric semantics and reach great results in relevance measures between \textit{Author-Paper} and \textit{Author-Author}. However, \ourmethod fails on the \textit{Author-Conference}. The possible reason is that there are only 5 labeled conference nodes in DBLP-Multi dataset for training. Due to the limited number of training nodes, \ourmethod cannot fully learn the relevance of conference nodes.

\begin{table*}[]
    \caption{Performance of different measures on community detection task.}
    \label{tab:community}
    \resizebox{\textwidth}{!}{%
        \begin{tabular}{@{}c|cccc|cccc|cccc|cccc|cccc@{}}
            \toprule
            \multirow{2}{*}{Method} & \multicolumn{4}{c|}{ACM} & \multicolumn{4}{c|}{DBLP-A} & \multicolumn{4}{c|}{DBLP-Multi} & \multicolumn{4}{c|}{IMDB} & \multicolumn{4}{c}{AIFB}                                                                                                                                                                                                                                                                                                  \\ \cmidrule(l){2-21}
                                    & F-score                  & NMI                         & ARI                             & Purity                    & F-score                  & NMI             & ARI             & Purity          & F-score                & NMI             & ARI             & Purity          & F-score         & NMI                   & ARI             & Purity          & F-score               & NMI             & ARI             & Purity          \\ \midrule
            SimRank                 & 0.5395                   & 0.0033                      & 0.0003                          & 0.4908                    & 0.5055                   & 0.3798          & 0.2125          & 0.4993          & 0.404                  & 0.0085          & 0               & 0.3078          & \textbf{0.5101} & 0.0034                & -0.0003         & 0.3803          & 0.3457                & 0.0648          & -0.004          & 0.4773          \\
            HeteSim                 & 0.5353                   & 0.0038                      & 0.003                           & 0.4933                    & 0.3978                   & 0.0058          & -0.002          & 0.3075          & 0.4092                 & 0.0083          & 0.0011          & 0.3030          & {\ul 0.5070}    & 0.0015                & -0.0003         & 0.3804          & \multicolumn{4}{c}{-}                                                       \\ \midrule
            Node2Vec                & 0.3506                   & 0.1324                      & 0.1056                          & 0.5413                    & 0.3196                   & 0.0972          & 0.08            & 0.4352          & 0.3164                 & 0.0795          & 0.0742          & 0.4313          & 0.3938          & 0.0777                & 0.0817          & 0.4920          & 0.3834                & 0.2011          & 0.1863          & 0.6013          \\
            Metapath2vec            & 0.4397                   & 0.1586                      & 0.1276                          & 0.5718                    & 0.3093                   & 0.0739          & 0.0484          & 0.3982          & 0.3536                 & 0.2834          & 0.2672          & 0.4012          & 0.3761          & 0.0219                & 0.0186          & 0.4387          & \multicolumn{4}{c}{-}                                                       \\ \midrule
            GCN                     & 0.4090                   & 0.0375                      & 0.0385                          & 0.5161                    & 0.2616                   & 0.0071          & 0.0048          & 0.3144          & 0.2754                 & 0.0101          & 0.0123          & 0.3236          & 0.3539          & 0.0013                & 0.0006          & 0.3934          & 0.4157                & 0.1444          & 0.1325          & 0.5795          \\
            GAT                     & 0.5595                   & 0.1676                      & 0.0926                          & 0.5693                    & 0.6439                   & 0.5174          & 0.5087          & {\ul 0.7536}    & 0.3945                 & 0.1561          & 0.0925          & 0.4294          & 0.495           & 0.0002                & -0.0004         & 0.3803          & 0.4506                & 0.2286          & {\ul 0.1935}    & 0.6023          \\
            LGNN                    & 0.6243                   & 0.4013                      & 0.4026                          & 0.7139                    & 0.4071                   & 0.0088          & -0.0009         & 0.3016          & 0.3959                 & 0.0073          & -0.0007         & 0.3030          & 0.4854          & 0.0156                & -0.008          & 0.3803          & {\ul 0.4572}          & 0.2437          & 0.1166          & 0.6136          \\
            HAN                     & {\ul 0.6379}             & {\ul0.4062}                 & {\ul 0.4369}                    & {\ul 0.7695}              & 0.2774                   & 0.0028          & 0.0011          & 0.3135          & \multicolumn{4}{c|}{-} & 0.3832          & 0.065           & 0.0678          & 0.5005          & \multicolumn{4}{c}{-}                                                                                                                   \\
            HGT                     & 0.5589                   & 0.2847                      & 0.2632                          & 0.6239                    & {\ul 0.6721}             & {\ul 0.5853}    & {\ul 0.5548}    & 0.7496          & {\ul 0.7797}           & {\ul 0.6754}    & {\ul 0.7025}    & {\ul 0.8679}    & 0.4941          & 0.0013                & 0               & 0.3798          & 0.4471                & {\ul 0.2688}    & 0.1902          & {\ul 0.6250}    \\ \midrule
            CP-GNN                  & 0.6170                   & 0.4199                      & 0.3718                          & 0.7069                    & 0.4507                   & 0.2738          & 0.2597          & 0.6175          & \multicolumn{4}{c|}{-} & 0.405           & {\ul 0.0899}    & {\ul 0.096}     & {\ul 0.5047}    & 0.3836                & 0.1146          & 0.1034          & 0.5341                                                                      \\
            \ourmethod              & \textbf{0.6968}          & \textbf{0.5065}             & \textbf{0.5136}                 & \textbf{0.8058}           & \textbf{0.8820}          & \textbf{0.7857} & \textbf{0.8411} & \textbf{0.9354} & \textbf{0.8620}        & \textbf{0.7539} & \textbf{0.8140} & \textbf{0.9239} & 0.4304          & \textbf{0.1181}       & \textbf{0.1234} & \textbf{0.5697} & \textbf{0.6688}       & \textbf{0.5001} & \textbf{0.4855} & \textbf{0.7386} \\ \bottomrule
        \end{tabular}%
    }
\end{table*}

\begin{figure*}[tbp]
    \centering
    \includegraphics[width=\textwidth]{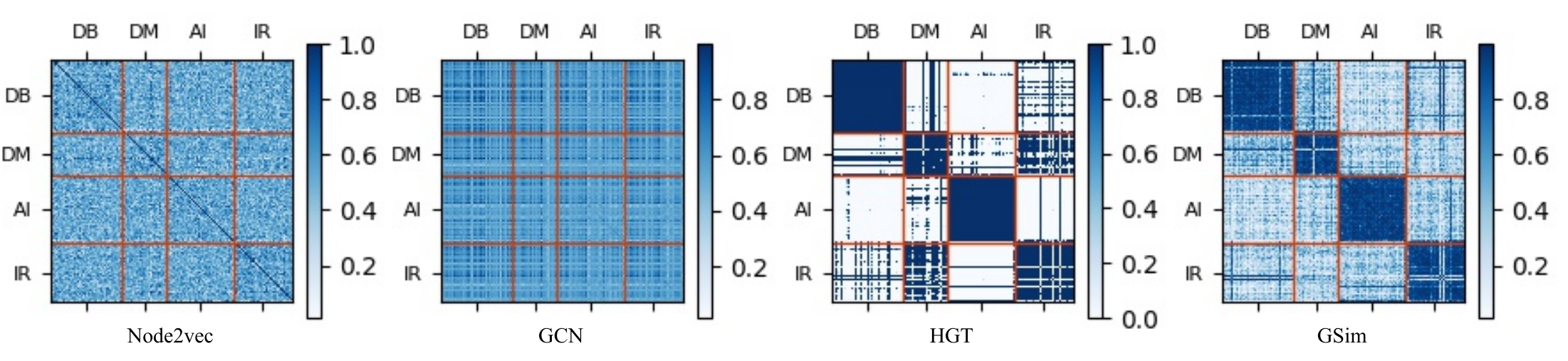}
    \caption{The relevance matrices generated by different methods on DBLP-Multi dataset. (DB: ``Database'', DM: ``Data Mining'', AI: ``Artificial Intelligence'', IR: ``Information retrieval'').}
    \label{fig:vis-matrix}
\end{figure*}

\subsection{Community Detection Results}
Relevance measure plays an essential role in community detection. Following previous research \cite{shi2014hetesim,wang2020effective}, we apply the Spectral Clustering \cite{von2007tutorial} on the relevance scores generated by different methods to perform the community detection. We choose the widely used F-score, NMI, ARI, and Purity as the metrics. The results are reported in Table \ref{tab:community}, where the best results are highlighted in bold and the second-best results are labeled with underlines.

From the results in Table \ref{tab:community}, we can see that \ourmethod outperforms other baselines on most datasets concerning different metrics. This demonstrates that \ourmethod can generate high-quality measure results via automatically excavating semantics in heterogeneous graphs.

As for node embedding methods (i.e., Node2vec and Metapath2vec), they perform better than the conventional measure methods (i.e., SimRank and HeteSim). This indicates that the learned node embedding can successfully consider the structure information in the graph which is essential for the relevance measure. Besides, the Metapath2vec only beats Node2vec in ACM and DBLP-Multi datasets. This shows that even when semantics is considered by adopting meta-paths, the choice of meta-path will also severely influence the performance of Metapath2vec. Additionally, the Metapath2vec uses only one meta-path, which cannot capture the semantics comprehensively.

For GNN-based methods, we can see that the GCN, GAT, and LGNN achieve relatively worse results. The possible reason is that they are originally proposed for homogeneous graphs, thus they do not consider the complex semantics in heterogeneous graphs. GAT performs better than GCN, which reflects the importance of the attention mechanism. The attention mechanism used in GAT can be regarded as a simple way to differentiate the node type and edge type in heterogeneous graphs. Thanks to the meta-path, HAN can explicitly excavate complex semantic information to achieve a better result. But due to the symmetry constraint of meta-path, HAN cannot handle the relevance measure between multiple node types in DBLP-Multi. On the other hand, HGT and CP-GNN do not require the meta-path. They adopt the designed attention mechanism to automatically capture the semantics in graphs, which helps them outperform other GNN-based methods in most datasets. But due to the constraint of context path, CP-GNN cannot be used in DBLP-Multi.

For \ourmethod, it adopts the context path to spontaneously leverage the semantics in the graph and utilizes the relation attention to differentiate their importance. Besides, thanks to the relation message passing, \ourmethod can capture both the symmetric and asymmetric semantics, thus it can measure the relevance between any node type.

\subsubsection{Case Study 3: Relevance matrix visualization}
To intuitively understand the relevance measure results of community detection, we first visualize the relevance matrices generated by different methods on the DBLP-Multi dataset in Fig. \ref{fig:vis-matrix}. Then, we selected several authors with different labels from DBLP-Multi and illustrate the relevance scores between them in Fig. \ref{fig:selected-matrix}.

In Fig. \ref{fig:vis-matrix}, we demonstrate the relevance matrices generated by Node2vec, GCN, HGT, and \ourmethod. Each entry of the relevance matrix depicts the relevance score between two nodes. The darker the color, the more relevant the two nodes are. The nodes in DBLP-Multi can be classified into four classes (i.e., DB: ``Database'', DM: ``Data Mining'', AI: ``Artificial Intelligence'', IR: ``Information retrieval''). Therefore, we group the nodes of the same class and divide them using orange lines.

From the results in Fig. \ref{fig:vis-matrix}, we can see that Node2vec and GCN generate inferior results. The relevance matrix of Node2vec is quite random except for the diagonal entries. This shows that Node2vec fails to consider the complex semantics in heterogeneous graphs and only captures simple structure information. Thus, it only captures the relevance between nodes and themselves. GCN generates a quite smooth relevance matrix, where each node shares a similar relevance score. This is because GCN adopts the graph convolutional layer that would smoothen the node embeddings. Thus, node embedding of each node will become similar and the relevance is measured inaccurately following the embeddings. The results of HGT are better, where nodes of the same class have higher relevant scores. However, it still has some error results. For example, HGT generates high relevance scores for some nodes between IR and DM. \ourmethod achieves relative better results, where nodes within the same classes are highly related whereas nodes between different classes have lower relevance scores.

To better demonstrate the relevance measure results, we select several authors with different labels and use the \ourmethod to measure their relevance scores. In Fig. \ref{fig:selected-matrix}, we can see that each author is most relevant to himself, which is consistent with the self-maximizing property. Then, we can see that authors with the same labels are highly related. For example, in the area of DM or IR, authors are highly relevant to each other. In addition, \ourmethod can also capture some underlying relevance between nodes. For instance, the author ``Il-Yeol Song'' is labeled with DB. However, he has also published lots of papers in the data mining area. Thus, he has relatively high relevance scores with authors in the DM area. From the aforementioned results, we can see that \ourmethod can not only accurately discover the relevance between nodes but is also able to capture the hidden relevance.

\begin{figure}[tbp]
    \centering
    \includegraphics[width=.8\columnwidth]{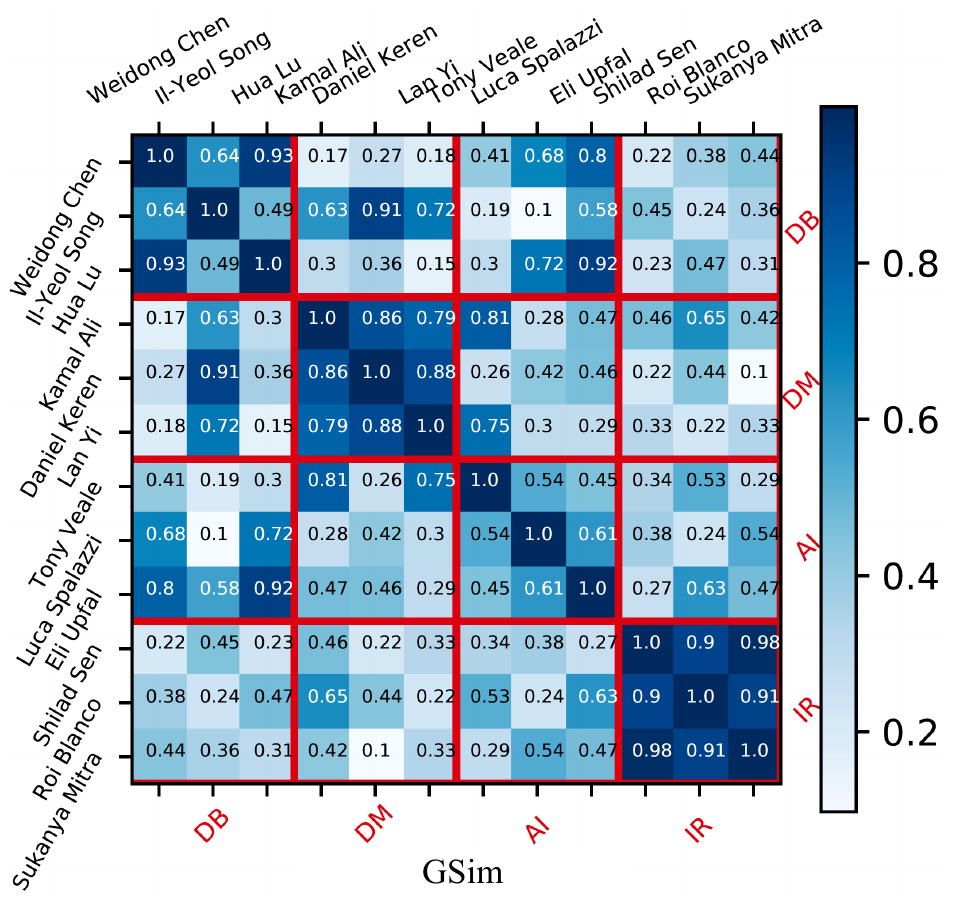}
    \caption{Relevance matrix between selected authors in DBLP-Multi dataset.}
    \label{fig:selected-matrix}
\end{figure}

\begin{figure}[tbp]
    \centering
    \includegraphics[width=.8\columnwidth]{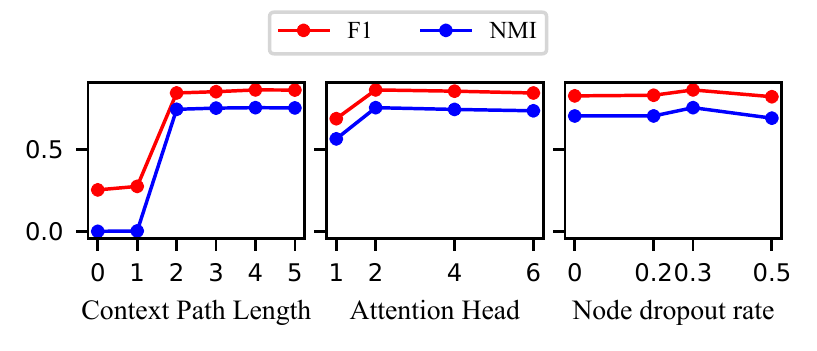}
    \caption{Parameter sensitivity w.r.t. different parameters.}
    \label{fig:parameter}
\end{figure}

\subsection{Parameters Analysis and Ablation Study}\label{sec:parameters}
In this section, we discuss the parameters analysis and ablation study.
As shown in Fig. \ref{fig:parameter}, with the increase of parameter values, \ourmethod's performances raise first and then drop slightly; the best performances are reached when the Context Path Length, Attention head, and Node dropout rate reach 4, 8, and 0.3, respectively. The reason is that an over large context path length may contain redundant semantics which deteriorates the performance. Although more attention heads can capture more diverse relation importance and increase the representation ability, they also introduce more parameters to the model, making it hard to train. Besides, too many nodes are dropped, which causes graph summary vectors cannot be well generated from the remaining nodes.

In Table \ref{tab:balanceloss}, we study the impact of balance between $\mathcal{L}_S$ and $\mathcal{L}_U$ in Equation \ref{eq:loss}. We can see that the performance of \ourmethod is the best when the balance is 1:1. This is because the two losses are complementary to each other and need to be equally considered during the training. When the balance is 0:1, the model only focuses on self-relevnace, which largely impairs performance. When the balance is 1:0, the model ignores the self-relevance, which also deteriorates the performance. Besides, reducing the weight ratio of both $\mathcal{L}_S$ and $\mathcal{L}_U$ results in a decrease in performance, emphasizing their equal importance in achieving optimal results.

In Table \ref{tab:ablation}, to evaluate the effectiveness of different components, we gradually remove the \textit{type-length attention}, \textit{relation attention}, and \textit{relation message passing}. From the results, we can see that the performance of the model decreases with components removed. Specifically, without the type-length attention and relation attention, \ourmethod fails to differentiate the importance of different relations. Moreover, without the relation message passing, \ourmethod cannot capture the asymmetric semantics as analyzed in section \ref{sec:cp-gnn+}.

\begin{table}[tbp]
    \caption{Balance between $\mathcal{L}_S$ and $\mathcal{L}_U$.}
    \label{tab:balanceloss}
    \center
    \resizebox{.65\columnwidth}{!}{
    \begin{tabular}{@{}ccccc@{}}
        \toprule
        $\mathcal{L}_S:\mathcal{L}_U$ & F1              & NMI             & ARI             & Purity          \\ \midrule
        1:1                           & \textbf{0.8856} & \textbf{0.7914} & \textbf{0.8539} & \textbf{0.9404} \\
        0:1                           & 0.2731          & 0.0285          & 0.0252          & 0.3736          \\
        1:0                           & 0.8801          & 0.7808          & 0.8386          & 0.9345
        \\
        0.8:1                         & 0.8827          & 0.7857          & 0.8421          & 0.9354          \\
        0.6:1                         & 0.8853          & 0.7911          & 0.8476          & 0.9379          \\
        1:0.8           &0.8807	&0.7821	&0.8393	&0.9349\\
        1:0.6           & 0.8803     & 0.7810 & 0.8387 &0.9340\\
        \bottomrule
    \end{tabular}%
    }
\end{table}

\begin{table}[tbp]
    \caption{Effectiveness of different components.}
    \label{tab:ablation}
    \center
    \resizebox{.65\columnwidth}{!}{
    \begin{tabular}{@{}ccc@{}}
        \toprule
        Method                                & NMI             & ARI             \\ \midrule
        GSim                                  & \textbf{0.8620} & \textbf{0.7539} \\
        \textit{w/o} type-length attention    & 0.8574          & 0.7497          \\
        \textit{w/o} relation attention       & 0.8405          & 0.7318          \\
        \textit{w/o} relation message passing & 0.8284          & 0.7104          \\ \bottomrule
    \end{tabular}%
    }
\end{table}

\begin{table}[tbp]
    \caption{Improvement of memory usage and training time brought by relation message passing.}
    \label{tab:memory}
    \center

    \resizebox{.9\columnwidth}{!}{
        \begin{tabular}{@{}ccc@{}}
            \toprule
            Method                                & Memory (Mb)          & Time/Epoch (s)       \\ \midrule
            \textit{w/o} relation message passing & 7090                 & 0.33                 \\
            GSim                                  & 2245                 & 0.12                 \\\midrule
            Improvement                           & $\downarrow$ 68.33\% & $\downarrow$ 63.63\% \\ \bottomrule
        \end{tabular}
    }
\end{table}
\subsection{Impact of Relation Message Passing}
In this section, we further study the impact of relation message passing. Relation message passing is proposed to measure nodes of heterogeneous types without increasing complexity. We show the improvement of GPU memory usage and training time brought by relation message passing in Table \ref{tab:memory}. From Table \ref{tab:memory}, we can see that, by using relation message passing, memory usage and training time are reduced by 68.33\% and 63.63\%, respectively. This is because the model $w/o$ relation message passing needs $2K$ layers to achieve the same results after adding intermediate nodes, which introduces additional memory consumption and training time.

% \begin{figure*}[h]
%     \centering
%     \setlength{\abovecaptionskip}{1pt}
%     \setlength{\belowcaptionskip}{1pt}
%     \begin{subfigure}[b]{0.54\columnwidth}
%         \includegraphics[width=1.\linewidth]{./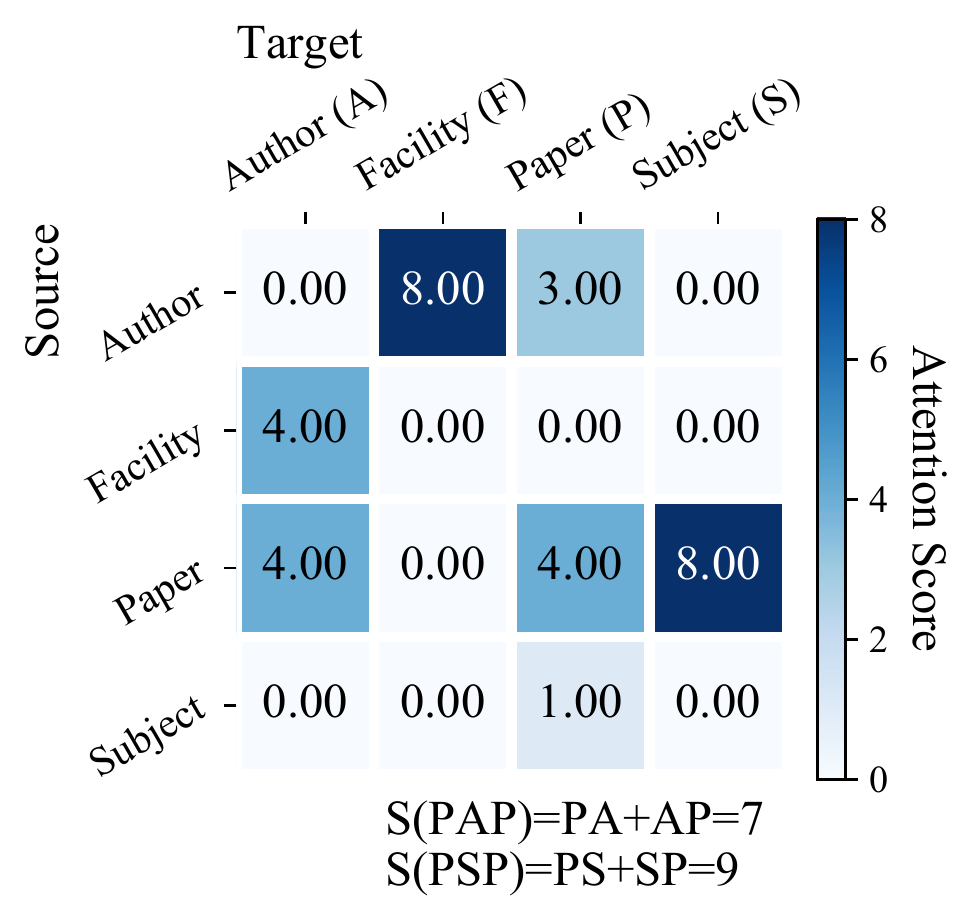}
%         \caption{Attention matrix of 2-length context path on ACM.}
%         \label{fig:ACMMatrix}
%     \end{subfigure}
%     \begin{subfigure}[b]{0.45\columnwidth}
%         \includegraphics[width=1.\linewidth]{./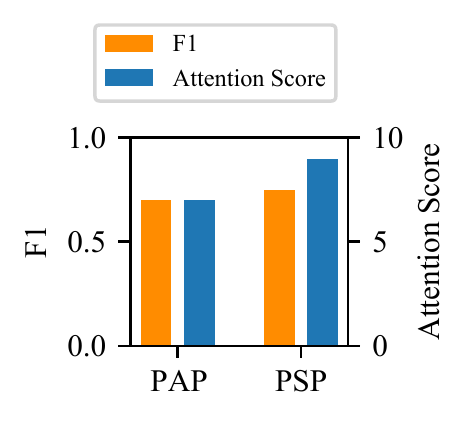}
%         \caption{Metapath2vec F1 values on ACM.}
%         \label{fig:ACMResult}
%     \end{subfigure}
%     \begin{subfigure}[b]{0.54\columnwidth}
%         \includegraphics[width=1.\linewidth]{./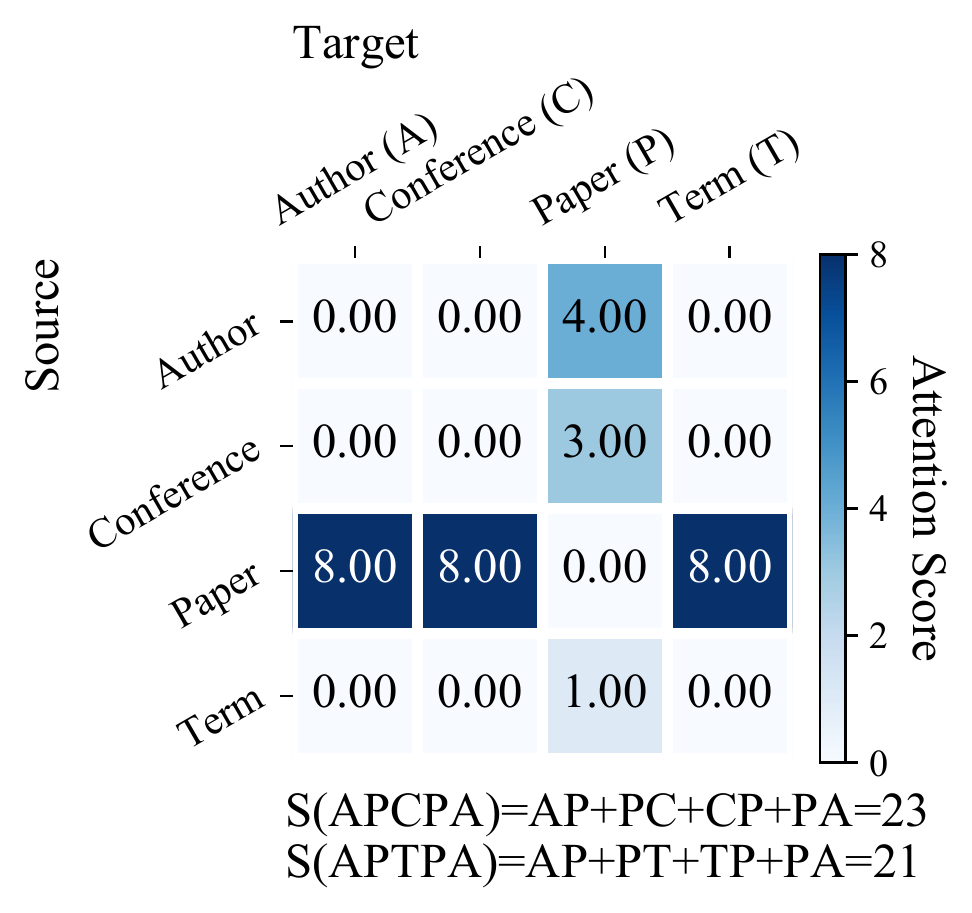}
%         \caption{Attention matrix of 4-length context path on DBLP.}
%         \label{fig:DBLPMatrix}
%     \end{subfigure}
%     \begin{subfigure}[b]{0.45\columnwidth}
%         \includegraphics[width=1.\linewidth]{./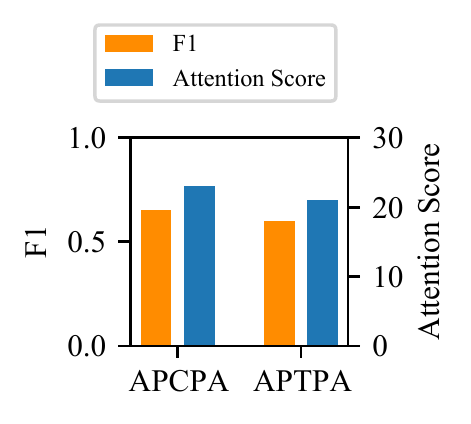}
%         \caption{Metapath2vec F1 values on DBLP.}
%         \label{fig:DBLPResult}
%     \end{subfigure}
%     \caption{Visualization of the context relation attention matrix on ACM and DBLP dataset.}
%     \label{fig:rel_atten_matrix}
% \end{figure*}

\subsection{Relation Attention Visualization}

% To further analyze whether \ourmethod can differentiate the context paths, we first present the corresponding relations attention matrix acquired from \ourmethod in Fig. \ref{fig:DBLPMatrix}, where each entry is the attention score of the relation with a source node type and a target node type. The attention score of each context path can be computed by summarizing the scores of relations along the path. Then, to justify whether the paths with higher attention scores are more meaningful for relevance measure, we adopt the Metapath2vec to evaluate the effect of each path. The results are shown in Fig. \ref{fig:DBLPResult}

% For example, the paths APCPA and APTPA in DBLP are both 4-length context paths. Therefore, the attention scores can be computed from the relation attention matrix of 4-length context path shown in Fig. \ref{fig:DBLPMatrix}, where $S(APCPA)=AP+PC+CP+PA=23$ and $S(APTPA)=AP+PT+TP+PA=21$. Clearly, we can find that the relationship reflected by APCPA is a little bit more important than that of APTPA. This finding can be proved by the result shown in Fig. \ref{fig:DBLPResult} where the F1 of APCPA is higher than APTAP.

% In summary, the above analysis demonstrates that the Relation Attention can discover many context paths of different importance and capture the context path of higher importance that are more meaningful and useful for relevance measure.

To further analyze whether \ourmethod can differentiate the context paths, we first present the corresponding relations attention matrix acquired from \ourmethod in Fig. \ref{fig:ACMMatrix} and \ref{fig:DBLPMatrix}, where each entry is the attention score of the relation with a source node type and a target node type. The attention score of each context path can be computed by summarizing the scores of its relations. Then, to justify whether the paths with higher attention scores are more meaningful for community detection, we adopt the Metapath2vec to evaluate the effect of each path. The results are shown in Fig. \ref{fig:ACMResult} and \ref{fig:DBLPResult}

For example, the paths PAP and PSP in ACM are both 2-length context path. Therefore, there attention scores can be computed from the relation attention matrix of 2-length context path shown in Fig. \ref{fig:ACMMatrix} where $S(PAP)=PA+AP=7$ and $S(PSP)=PS+SP=9$. Clearly, we can find that the attention score of PSP is higher than PAP, which means the path PSP is slightly more important than PAP for community detection. This can be justified by the result shown in Fig. \ref{fig:ACMResult} where the F1 score of PSP is higher than PAP.
Similarly, from Fig. \ref{fig:DBLPMatrix}, the attention scores of paths APCPA and APTPA are $S(APCPA)=AP+PC+CP+PA=23$ and $S(APTPA)=AP+PT+TP+PA=21$. This indicates that the relationship reflected by APCPA is a little bit more important than that of APTPA. This finding can be proved by the result shown in Fig. \ref{fig:DBLPResult} where the F1 of APCPA is higher than APTAP.

In summary, the above analysis demonstrates that relation attention can discover many context paths of different importance and capture the context path of higher importance that are more meaningful and useful for relevance measure.

\begin{figure}[]
    \centering
    \begin{subfigure}[b]{0.54\columnwidth}
        \includegraphics[width=1.\linewidth]{./image/acm_attention_matrix.pdf}
        \caption{Attention matrix of 1-length context path on ACM.}
        \label{fig:ACMMatrix}
    \end{subfigure}
    \begin{subfigure}[b]{0.45\columnwidth}
        \includegraphics[width=1.\linewidth]{./image/acm_m2v.pdf}
        \caption{Metapath2vec F1 values on ACM.}
        \label{fig:ACMResult}
    \end{subfigure}
    \caption{Visualization of the context relation attention matrix on ACM.}
    \label{fig:rel_atten_matrix}
\end{figure}

\begin{figure}[]
    \centering
    \begin{subfigure}[b]{0.54\columnwidth}
        \includegraphics[width=1.\linewidth]{./image/dblp_attention_matrix.pdf}
        \caption{Attention matrix of 3-length context path on DBLP.}
        \label{fig:DBLPMatrix}
    \end{subfigure}
    \begin{subfigure}[b]{0.45\columnwidth}
        \includegraphics[width=1.\linewidth]{./image/dblp_m2v.pdf}
        \caption{Metapath2vec F1 values on DBLP.}
        \label{fig:DBLPResult}
    \end{subfigure}
    \caption{Visualization of the context relation attention matrix on DBLP.}
    \label{fig:DBLP_rel_atten_matrix}
\end{figure}

% \begin{figure}[tbp]
%     \centering
%     \setlength{\abovecaptionskip}{1pt}
%     \setlength{\belowcaptionskip}{1pt}
%     \begin{subfigure}[b]{0.54\columnwidth}
%         \includegraphics[width=1.\linewidth]{./image/dblp_attention_matrix.pdf}
%         \caption{Attention matrix of 3-length context path on DBLP.}
%         \label{fig:DBLPMatrix}
%     \end{subfigure}
%     \begin{subfigure}[b]{0.45\columnwidth}
%         \includegraphics[width=1.\linewidth]{./image/dblp_m2v.pdf}
%         \caption{Metapath2vec F1 values on DBLP.}
%         \label{fig:DBLPResult}
%     \end{subfigure}
%         \caption{Visualization of the context relation attention matrix on DBLP.}
%     \label{fig:DBLP_rel_atten_matrix}
% \end{figure}

% \begin{figure}[tbp]
%     \centering
%     \begin{minipage}[b]{0.54\columnwidth}
%         \includegraphics[width=1.\linewidth]{./image/dblp_attention_matrix.pdf}
%         \caption{Attention matrix of 3-length context path on DBLP.}
%         \label{fig:DBLPMatrix}
%     \end{minipage}
%     \begin{minipage}[b]{0.45\columnwidth}
%         \includegraphics[width=1.\linewidth]{./image/dblp_m2v.pdf}
%         \caption{Metapath2vec F1 values on DBLP.}
%         \label{fig:DBLPResult}
%     \end{minipage}
%     %     \caption{Visualization of the context relation attention matrix on DBLP.}
%     % \label{fig:DBLP_rel_atten_matrix}
% \end{figure}

\section{Conclusion}\label{sec:conclusion}
In this paper, we propose a context path-based graph neural network model for relevance measure in heterogeneous graphs, called \ourmethod. \ourmethod can automatically capture the semantics among nodes, and differentiate their importance. Furthermore, we introduce the relation message passing mechanism that enables \ourmethod to measure relevance between nodes with different types. The effectiveness of \ourmethod is guaranteed by both theoretical analysis and empirical study.
Extensive experiments on four real-world datasets show that \ourmethod outperforms other baselines and reaches the best performance in both the community detection and relevance search task.

In the future, we will consider the relevance measure in the unsupervised setting. Besides, we will try to explore the potential of \ourmethod for other graph mining tasks, such as node importance measure and node ranking. These will require us to consider the fine-grinned relative relevance in the heterogeneous graphs.

% use section* for acknowledgment
% \ifCLASSOPTIONcompsoc
%   % The Computer Society usually uses the plural form
%   \section*{Acknowledgments}
% \else
%   % regular IEEE prefers the singular form
%   \section*{Acknowledgment}
% \fi

% The authors would like to thank...

% Can use something like this to put references on a page
% by themselves when using endfloat and the captionsoff option.
\ifCLASSOPTIONcaptionsoff
  \newpage
\fi

% trigger a \newpage just before the given reference
% number - used to balance the columns on the last page
% adjust value as needed - may need to be readjusted if
% the document is modified later
%\IEEEtriggeratref{8}
% The "triggered" command can be changed if desired:
%\IEEEtriggercmd{\enlargethispage{-5in}}

% references section

% can use a bibliography generated by BibTeX as a .bbl file
% BibTeX documentation can be easily obtained at:
% http://mirror.ctan.org/biblio/bibtex/contrib/doc/
% The IEEEtran BibTeX style support page is at:
% http://www.michaelshell.org/tex/ieeetran/bibtex/
\bibliographystyle{IEEEtran}
% argument is your BibTeX string definitions and bibliography database(s)
\bibliography{IEEEabrv,sections/main.bib}

\noindent\textbf{Linhao Luo} received the bachelor degree from the Harbin Institute of Technology, Shenzhen in 2021. He is now a PhD student in the faculty of information and technology at Monash University. His research interests include machine learning, data mining, and graph neural networks.

\noindent\textbf{Yixiang Fang} received the Ph.D. degree from the University of Hong Kong (HKU) in 2017. Currently, he is an associate professor in the School of Data Science at the Chinese University of Hong Kong, Shenzhen. His research interests mainly focus on the areas of data management, data mining, and artificial intelligence over big data.

\noindent\textbf{Moli Lu} is a master student in the Harbin Institute of Technology, Shenzhen. Her research interests focus on graph mining.

\noindent\textbf{Xin Cao} is a senior lecturer in the School of Computer Science and Engineering at the University of New South Wales. His research interests include database management, data mining, big data analytics, and artificial intelligence.

\noindent\textbf{Xiaofeng Zhang} is currently an associate professor with the Department of Computer Science, Harbin Institute of Technology, Shenzhen. His research interests include data mining, machine learning, and graph mining.

\noindent\textbf{Wenjie Zhang} is a Professor in the School of Computer Science and Engineering at the University of New South Wales. Her research interests lie in data management and analytics for large-scale data, especially graph, spatial-temporal, and image data. She serves as an Associate Editor for TKDE, Associate Editor for PVLDB 2022, and (senior) PC member for leading conferences in database and data
mining.

\end{document}